\documentclass[aps,twocolumn,pra,floatfix,tightenlines,superscriptaddress,longbibliography]{revtex4-1}
\usepackage{graphicx}
\usepackage{epstopdf}
\usepackage[english]{babel}
\usepackage{amsmath}
\usepackage{amssymb}
\usepackage{mathrsfs}
\usepackage{appendix}
\usepackage{bm}
\usepackage{float}
\usepackage{color}
\usepackage{longtable}
\usepackage{url}
\usepackage{enumerate}
\usepackage[usenames,dvipsnames]{xcolor}
\usepackage[colorlinks=true,linkcolor=Blue,urlcolor=Blue,citecolor=Blue]{hyperref}
\usepackage{amsthm}
\newtheorem{theorem}{Theorem}
\newtheorem{lemma}{Lemma}
\newtheorem{definition}{Definition}

\graphicspath{{./Figures}}

\begin{document}

\title{Exponentially improved efficient machine learning for quantum many-body states with provable guarantees}

\author{Yanming Che}
\email{yanmingche01@gmail.com}
\affiliation{Department of Physics, University of Michigan, Ann Arbor, Michigan
48109-1040, USA}
\affiliation{Theoretical Quantum Physics Laboratory, Cluster for Pioneering Research, RIKEN, Wako-shi, Saitama 351-0198, Japan}

\author{Clemens Gneiting}
\email{clemens.gneiting@riken.jp}
\affiliation{Theoretical Quantum Physics Laboratory, Cluster for Pioneering Research, RIKEN, Wako-shi, Saitama 351-0198, Japan}
\affiliation{Center for Quantum Computing, RIKEN, Wako-shi, Saitama 351-0198, Japan}

\author{Franco Nori}
\email{fnori@riken.jp}
\affiliation{Theoretical Quantum Physics Laboratory, Cluster for Pioneering Research, RIKEN, Wako-shi, Saitama 351-0198, Japan}
\affiliation{Center for Quantum Computing, RIKEN, Wako-shi, Saitama 351-0198, Japan}
\affiliation{Department of Physics, University of Michigan, Ann Arbor, Michigan
48109-1040, USA}

\date{\today}

\begin{abstract}
Solving the ground state and the ground-state properties of quantum many-body systems is generically a hard task for classical algorithms. For a family of Hamiltonians defined on an $m$-dimensional space of physical parameters, the ground state and its properties at an arbitrary parameter configuration can be predicted via a machine learning protocol up to a prescribed prediction error $\varepsilon$, provided that a sample set (of size $N$) of the states can be efficiently prepared and measured. In a recent work [Huang \emph{et al.}, Science \textbf{377}, eabk3333 (2022)], a rigorous guarantee for such a generalization was proved. Unfortunately, an exponential scaling for the provable sample complexity, $N = m^{ {\cal{O}} \left(\frac{1}{\varepsilon} \right) }$, was found to be universal for generic gapped Hamiltonians. This result applies to the situation where the dimension of the parameter space is large while the scaling with the accuracy is not an urgent factor, not entering the realm of more precise learning and prediction.
In this work, we consider an alternative relevant scenario, where the effective dimension $m$ is a finite, not necessarily large constant while the scaling with the prediction error becomes the central concern. By jointly preserving the fundamental properties of density matrices in the learning protocol and utilizing the continuity of quantum states in the parameter range of interest, we rigorously obtain a polynomial sample complexity for predicting quantum many-body states and their properties, with respect to the \emph{uniform} prediction error $\varepsilon$ and the number of qubits $n$, with $N = \mathrm{poly} \left(\varepsilon^{-1}, \ n, \ \log \frac{1}{\delta}\right)$, where $\mathrm{poly}$ denotes a polynomial function, and ($1-\delta$) is the probability of success. Moreover, if restricted to learning local quantum-state properties, the number of samples can be further reduced to $N = \mathrm{poly} \left(\varepsilon^{-1}, \ \log \frac{n}{\delta}\right)$. Numerical demonstrations confirm our findings, and an alternative approach utilizing statistical learning theory with reproducing kernel Hilbert space achieves consistent results. The mere continuity assumption indicates that our results are not restricted to gapped Hamiltonian systems and properties within the same phase.
\end{abstract}

\maketitle

\section{Introduction}
Predicting quantum many-body ground states and their properties lies at the heart of a multitude of branches of modern quantum science~\cite{CarleoTroyerScience2017,AlbashRMPAQC,GeorgescuRMPQuSim,CaoChemRev2019}, ranging from condensed-matter physics, quantum computation and simulation, to quantum chemistry. However, finding the exact ground states of quantum many-body Hamiltonians is generically beyond the reach of efficient classical algorithms, under a widely believed conjecture in computational complexity theory that the polynomial hierarchy does not collapse~\cite{AroraCCbook2009}. Such complexity arises from the rich structure of the quantum many-body state space as well as the principles of quantum mechanics. Recently, it was proved~\cite{AbrahamsenSubExp2020,HuangScience2022Manybody} that, for a particular family of gapped Hamiltonians, evaluating ground-state properties such as the expectation values of local operators, can be as hard as solving $\mathrm{NP}$-hard problems in classical settings. 

On the other hand, machine learning as an effective approximation scheme whose solution can be efficiently obtained via proper optimization processes, has found a wide spectrum of applications in solving complex quantum-physics problems~\cite{Schuld2014,Biamonte2017QML,Dunjko2018,CarleoRMPML,DasSarma2019,MehtaPhysRep2019,RinaldiPRXQ2022,ZengPRL2023,DanielNano2024,ShiPRApp2024}, including approximating and predicting quantum states~\cite{CarleoTroyerScience2017,GaoNC2017,EquivPRB2017,DengPRB2017,TorlaiNatPhys2018,ChooPRL2018,GaoPRL2018Exp,NoeScience2019,YoshiokaPRB2019,HartmannPRL2019,NagyPRL2019,VicentiniPRL2019,LuchnikovEntropy2019,InferringDM2019,KwonPRB2019,SharirPRL2020,GlielmoPRX2020,BorinPRB2020,ParkPRR2020,Ohtsuki2020,ZenPRE2020,Barr2020,MelkaniPRA2020,Palmieri_2020,LohaniMLST2020,NeugebauerPRA2020,ShahnawazPRL2020,ShahnawazPRR2021,YoshiokaCommunPhys2021,YaoPRX2021,NomuraPRL2021,TranPRL2021_LTQT,HuangScience2022Manybody,ZhengPRR2022,ChePRB2022,NatPhysRev2023_Complexity}, as well as identifying phases and phase transitions~\cite{LeiWangPRB2016,Ohtsuki2016,WetzelPRE2017Unsupervised,Broecker2017,HuPRE2017Discovering,PontePRB2017Kernel,CarrasquillaNatPhys2017,NieuwenburgNatPhys2017,ZhangPRL2017Quantum,PhysRevX.7.031038,KelvinPRB2018,BeachPRB2018Machine,XinWanPRB2018,YeHuaPRL2018,HuiZhaiPRL2018,HuiZhaiPRB2018,HuembeliPRB2018,YoshiokaPRB2018,Pilozzi2018,VenderleyPRL2018,ZhangPRE2019Machine,HuembeliPRB2019,PhysRevB.99.121104,RemNatPhys2019,PhysRevB.99.104410,Giannetti2019,CanabarroPRB2019,FrankPRE2019,DurrPRB2019,LianPRL2019,Rodriguez-NievaNatPhys2019,ChePRB2020,ScheurerPRL2020,LongPRL2020,LidiakPRL2020,BalabanovPRR2020,Greplova2020,Berezutskii2020,DengPRL2021,Mendes-SantosPRX2021,Mendes-SantosPRXQ2021,LongPRL2023}. 
This raises the hope that machine learning may also help, in some cases, with \emph{efficiently} predicting quantum many-body ground states and their properties; for instance, when the Hamiltonian depends on parameters and a training set of ground states or their properties can be efficiently prepared for some specific parameter choices, in which case learning algorithms can help to generalize to unknown parameters with a lower cost compared to that required in experiments, and to make predictions for the entire parameter space, up to a small prediction error $\varepsilon$. 

Such a situation has recently been addressed in~\cite{HuangScience2022Manybody}, where a rigorous guarantee for learning and predicting properties of quantum many-body ground states was presented, provided that a training set of the ground states is efficiently available via classical shadow tomography~\cite{HuangNatPhys2020,HuangPRL2021}. But unfortunately, a provably exponential scaling of the sample complexity $N$ with the prediction error $\varepsilon$, $N = m^{ {\cal{O}} \left(\varepsilon^{-1} \right) }$, was found to be universal for generic gapped Hamiltonians, where $m$ is the dimension of the parameter space. This situation is therefore mostly useful when the dimension $m$ is large while the error tolerance is high, and precise learning and prediction are not in focus.

An alternative relevant practical scenario may be that the scaling of $N$ with the accuracy is the main concern while the number of parameters $m$ is a finite constant. This includes, for instance, situations where one is only concerned about spatially local properties of the system, or about continuous macroscopic quantities in the critical region of phase transtitions, where only a small number of terms in the Hamiltonian is relevant.

In this work, we are motivated by the framework of the Probably Approximately Correct (PAC) learnable~\cite{Valiant1984,FoundationsMLBook}, which is a fundamental theory of computational learning, and requires the minimal number of samples for the learning task to scale (at most) polynomially in the inverse of the prediction error ($\varepsilon^{-1}$), as well as in the inverse of the failure probability ($\delta^{-1}$). We explore theoretical guarantees for polynomially efficient learning that significantly reduces the provable sample complexity scaling exponentially with $\varepsilon^{-1}$ in~\cite{HuangScience2022Manybody}, and find a decrease of many orders of magnitude in it. The polynomial sample complexity is numerically demonstrated for a quantum $XY$ model, which highlights that our results apply to more general quantum systems beyond gapped Hamiltonians~\cite{BachmannCMP2011} and properties of a single phase.

While using the framework of kernel-based learning, we emphasize \emph{physical constraints} on the learning protocol to avoid possible unphysical predictions. These constraints lead us to introduce the concept of positive good kernels (PGKs)~\cite{FourierBookStein}, which will be elaborated below. 

Moreover, kernel-based models for learning and predicting quantum states have the following advantages: First, kernel methods have been widely used in supervised and unsupervised learning tasks such as nonlinear regression and classification~\cite{SVMBook2008,FoundationsMLBook}, and also dimensionality reduction~\cite{coifman2005geometric,nadler2006diffusion,Rodriguez-NievaNatPhys2019,ChePRB2020,ScheurerPRL2020}. The kernel itself contains rich information such that there is no need to calculate the feature maps in the high-dimensional feature space~\cite{YuTing2017,BartkiewiczSciRep2020}, which drastically reduces the computational cost. Second, even supervised learning with neural networks, the currently most popular architecture, is equivalent to the neural tangent kernel (NTK) regression in the limit of infinite width of the network~\cite{NTK2018}. Third, each positive kernel is uniquely associated with a reproducing kernel Hilbert space (RKHS)~\cite{SVMBook2008,FoundationsMLBook,MinhJMLR2016,YuTing2017,arxiv2101.11020}, which can be a universal hypothesis space for the learning problem (in the case that the kernel is a universal kernel). With these properties, kernel-based methods not only are of practical value, but also can provide a unified framework of theoretical analysis for various provable guarantees.

\section{Setting and model}
We start from a continuously parametrized family of $n$-qubit quantum states $\rho(x)$ with $x \in {\cal{X}}$, where ${\cal{X}} = [-\frac{L}{2}, \frac{L}{2}]^m \subset \mathbb{R}^m$ is an $m$-dimensional space of physical parameters, with $L$ the length of the intervals in each dimension. Without loss of generality, here we restrict to qubit models. For fermionic models, they can be encoded as qubits through, for example, the Jordan-Wigner~\cite{JordanZphysik1928} or the Bravyi-Kitaev~\cite{BravyiAnnPhys2002} transformation. The states $\rho(x)$ may be, but are not restricted to, the ground states of a parametrized Hamiltonian $H(x)$. Additionally, we assume that the parameter space is equipped with a probability measure $\mu$, according to which a sample $x_i$ is drawn from $\cal{X}$, and that a training set of quantum states, ${\cal{S}} = \{x_i, \rho(x_i) \}_{i=1}^N$, can be efficiently prepared classically or quantumly. 

While the tomography for accessing $\rho(x_i)$ experimentally is in general expensive for multi-qubit systems, various strategies have been proposed to reduce the cost of quantum state tomography, e.g., via compressed sensing~\cite{GrossPRL2010,FlammiaNJP2012}, or with neural-network quantum state tomography~\cite{TorlaiNatPhys2018,MelkaniPRA2020,Palmieri_2020,LohaniMLST2020,NeugebauerPRA2020,ShahnawazPRL2020,ShahnawazPRR2021}. In particular, in classical shadow tomography~\cite{HuangNatPhys2020,HuangPRL2021, HuangScience2022Manybody}, the sample state is approximated via randomized Pauli measurements $\rho(x_i) \approx \sigma_T(x_i)$, with $\sigma_T(x_i)$ the classical shadow state over $T$ copies of measurements at the parameter $x_i$. Moreover, with locality assumptions, the approximation via shadow states has been proved~\cite{HuangNatPhys2020,HuangScience2022Manybody} to be very efficient, requiring only $T = {\cal{O}} \left(\log n / \epsilon^2 \right)$ copies of measurements, where $\epsilon$ is the precision of the tomography in trace norm. For more details about the classical shadow tomography, see~\cite{HuangNatPhys2020,HuangScience2022Manybody} 

Without loss of generality, the target functions to be learned, such as the real and imaginary parts of the density-matrix entries, or the order parameters of quantum phases, are assumed to be \emph{continuous} real functions $f: {\cal{X}} \to \mathbb{R}$~\cite{ContinuityNote}. Furthermore, periodic boundary conditions can be assumed for $\cal{X}$, as in~\cite{HuangScience2022Manybody}, such that the function $f$ is defined on a circle of length $L$ in each dimension.
Finally, the predicted density matrix at an arbitrary $x \in {\cal{X}}$ is given by the kernel generalization from the training set ${\cal{S}}$, 
\begin{eqnarray}
\label{eq:Kernel_DM}
\sigma_N (x) = \frac{1}{N} \sum_{i=1}^N K_{\Lambda} (x-x_i) \rho(x_i),
\end{eqnarray}
where we assume a kernel with translational symmetry, i.e., 
\begin{eqnarray}
K_{\Lambda} (x, x_i) = K_{\Lambda} (x-x_i),
\end{eqnarray}
with positive integer index $\Lambda$.
 
Note that, the estimated density matrix $\sigma_N (x)$ in (\ref{eq:Kernel_DM}) should preserve the fundamental properties of density operators for all $x \in {\cal{X}}$, i.e., it should be Hermitian, positive semi-definite, and have unit trace. The first two conditions require $K_{\Lambda}(x) \ge 0$ to be real, while the last one, due to approximation and statistical errors, can be approximately satisfied in a practical learning process, i.e., 
\begin{eqnarray}
\mathrm{Tr} \sigma_N(x) = \frac{1}{N} \sum_{i=1}^N K_{\Lambda} (x-x_i) \approx 1.
\end{eqnarray}

\section{Positive good kernels (PGKs)}
In order to facilitate an efficient and accurate learning of the quantum state (\ref{eq:Kernel_DM}) satisfying the physical constraints for density matrices, we characterize the following conditions for PGKs (e.g., see~\cite{FourierBookStein,PGKNote}): 
\newline
(I) Positivity and boundness: 
\begin{eqnarray}
0 \le K_{\Lambda}(x) \le {\cal{O}}\left(\Lambda^{\tau} \right) \ \left(\forall x \in {\cal{X}} \right),
\end{eqnarray}
with $\tau$ some positive integer; 
\newline
(II) Normalization: 
\begin{eqnarray}
\int_{x \in {\cal{X}}} K_{\Lambda}(x)  \mathrm{d}\mu(x) = 1;
\end{eqnarray}
\newline
(III) $\eta$-convergence: For all $0 < \eta \le L$, 
\begin{eqnarray}
\int_{x \in {\cal{X}}, \ \| x\|_2 \ge \eta} | K_{\Lambda}(x) | \mathrm{d}\mu(x) \le {\cal{O}}\left( \Lambda^{-1}\right).
\end{eqnarray}

The normalization (II) of the kernel guarantees that $\mathrm{Tr} \sigma_N(x) \approx 1$ with high probability with respect to ${\cal{S}}$. The conditions (I, II, III) allow us to predict the continuous quantum state within a prescribed uniform precision and in an efficient manner (see more details and examples of PGKs in Appendix~\ref{sec:ProofThms}). If the uniform distribution on $\cal{X}$ is used, the probability measure will be given by $\mathrm{d}\mu(x) = \frac{1}{L^m} \mathrm{d} x$. 
As an example, we present a rectangular Fej\'er kernel, which is a PGK (see Appendix~\ref{sec:ProofThms} for proof details) and which is given by~\cite{FourierBookStein,Pfister2019BoundingMT} 
\begin{eqnarray}
\label{eq:Fejer_Closed}
F_{\Lambda} (x) = \frac{1}{\Lambda^m} \prod_{i = 1}^m \frac{\sin^2 \left( \frac{\Lambda \pi}{L} x_i\right)}{\sin^2 \left( \frac{\pi x_i}{L} \right)}, 
\end{eqnarray}
where we have $0 \le F_{\Lambda} (x) \le \Lambda^m$.

In~\cite{HuangScience2022Manybody}, an $\ell_2$-Dirichlet kernel, $D_{\Lambda} (x)$, has been shown to perform well in approximating smooth functions; in particular the average of local observables with respect to the ground state, with a bounded derivative of the first-order and through a truncated Fourier series. However, we find that, in the task of kernel learning of the density matrix, the Dirichlet kernel may fail to be a PGK, in the sense that it violates the positivity and the $L_1$-boundness~\cite{FourierBookStein}. For example, taking the dimension $m=1$, the entries of $D_{\Lambda} (x)$ can be negative for some values of $x$, and also, $\frac{1}{2\pi}\int_{-\pi}^{\pi} | D_{\Lambda} (x) | \mathrm{d}x \ge c \log \Lambda $, with some constant $c>0$. 

\section{Results} 
We are now prepared to present our main findings exposed by the following two theorems. 

\begin{theorem}[Efficient learning of quantum-state representations with positive good kernels (PGKs)] 
\label{thm:LearningDM}
\textbf{Given:} A family of $n$-qubit unknown quantum states $\rho(x)$ with continuously~\cite{ContinuityNote} parametrized density-matrix entries defined on the parameter space ${\cal{X}} \subset \mathbb{R}^m$, where $m$ a constant; and a training set ${\cal{S}} = \{x_i, \rho(x_i) \}_{i=1}^N$ with $x_i$ drawn from $\cal{X}$ according to a probability density associated with $\mu$.
\medskip
\newline
\textbf{Goal:} Output an estimator $\sigma_N(x)$ for the density matrix which preserves the positivity and the unit trace through $(\ref{eq:Kernel_DM})$ via the PGKs, with an as small as possible number of samples $N$, and which is close to the true density matrix $\rho(x)$ for an arbitrary $x \in {\cal{X}}$, in terms of the uniform prediction error 
\begin{eqnarray}
\underset{x \in {\cal{X}}}{\sup} \left\| \sigma_N(x) - \rho(x) \right\|_{\ell_{\infty}} \le \varepsilon 
\end{eqnarray}
and
\begin{eqnarray}
\left| \mathrm{Tr} \sigma_N(x) - 1 \right| \le \varepsilon 
\end{eqnarray}
at the same time, with probability at least $1-\delta$ $(0 < \delta < 1)$.
\medskip
\newline
\textbf{Result (Sample complexity):} The above goal can be achieved with only 
\begin{eqnarray}
\label{eq:Nsample_DM}
N = \mathrm{poly} \left( \varepsilon^{-1}, \ n, \ \log \frac{1}{\delta} \right). 
\end{eqnarray}
Note that the matrix $\ell_{\infty}$-norm here (different from the operator norm) is defined as $\left\| \rho\right\|_{\ell_{\infty}} = \underset{i, j}{\max} \left| \rho_{ij} \right|$, with $\rho_{ij}$ the entries of the matrix $\rho$, and $\mathrm{poly}$ denotes a polynomial function. 
\end{theorem}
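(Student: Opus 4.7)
The plan is to decompose, entrywise, the error $\sigma_N(x) - \rho(x)$ into a \emph{bias} term $\mathbb{E}_{{\cal S}}[\sigma_N(x)] - \rho(x)$ and a \emph{statistical fluctuation} term $\sigma_N(x) - \mathbb{E}_{{\cal S}}[\sigma_N(x)]$, to bound each separately by $\varepsilon/2$ in the $\ell_\infty$ norm uniformly in $x \in {\cal X}$, and to combine them via the triangle inequality. The trace condition $|\mathrm{Tr}\,\sigma_N(x) - 1| \le \varepsilon$ will be handled in parallel, since $\mathrm{Tr}\,\sigma_N(x) = \frac{1}{N}\sum_i K_\Lambda(x - x_i)$ has expectation exactly $1$ by the normalization~(II).

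For the bias, linearity and~(II) give
\begin{equation*}
\mathbb{E}_{{\cal S}}[\sigma_N(x)] - \rho(x) = \int_{{\cal X}} K_\Lambda(x-y)\,[\rho(y) - \rho(x)]\, d\mu(y),
\end{equation*}
which I would split at $\|y-x\|_2 = \eta$. On the near region, uniform continuity of each density-matrix entry (with some modulus $\omega_\rho$) combined with~(I) and~(II) bounds the contribution in absolute value by $\omega_\rho(\eta)$; on the far region, the $\eta$-convergence~(III) together with $\|\rho(y)\|_{\ell_\infty} \le 1$ contributes an $O(\Lambda^{-1})$ term. Picking $\eta$ so that $\omega_\rho(\eta) \le \varepsilon/4$ and $\Lambda = \mathrm{poly}(\varepsilon^{-1})$ so that the tail is $\le \varepsilon/4$ controls the bias uniformly in $x$ and in every matrix entry.

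For the fluctuation, fix $x$ and an entry $(j,k)$: then $\sigma_N(x)_{jk}$ is an average of $N$ i.i.d.\ random variables $K_\Lambda(x-x_i)\rho(x_i)_{jk}$ of magnitude at most $O(\Lambda^\tau)$ by~(I), so Hoeffding's inequality gives
\begin{equation*}
\mathbb{P}\!\left(\left|\sigma_N(x)_{jk} - \mathbb{E}_{{\cal S}}[\sigma_N(x)_{jk}]\right| > \tfrac{\varepsilon}{2}\right) \le 2\,e^{-c N \varepsilon^{2} / \Lambda^{2\tau}}.
\end{equation*}
To upgrade this to a supremum over $x$, I would cover ${\cal X}$ with an $\eta'$-net of cardinality $O((L/\eta')^m)$ and use a polynomial-in-$\Lambda$ Lipschitz bound on $K_\Lambda$ (available for concrete PGKs like the Fej\'er kernel~(\ref{eq:Fejer_Closed})) to control the oscillation of $\sigma_N(x)_{jk}$ between net points. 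A union bound over the net, the $4^n$ matrix entries, and the trace event then yields
\begin{equation*}
N = O\!\left(\frac{\Lambda^{2\tau}}{\varepsilon^{2}}\left[n + m\log(L/\eta') + \log(1/\delta)\right]\right),
\end{equation*}
which, with $m$ a constant and both $\Lambda$ and $\eta'^{-1}$ polynomial in $\varepsilon^{-1}$, collapses to the advertised $N = \mathrm{poly}(\varepsilon^{-1}, n, \log(1/\delta))$.

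The main obstacle, as I see it, is matching the two error sources consistently: the bias forces $\Lambda$ to grow with $\varepsilon^{-1}$ at a rate set by the modulus of continuity $\omega_\rho$, while the variance inflates the sample size by $\Lambda^{2\tau}$. Extracting a clean polynomial dependence therefore hinges on instantiating a concrete PGK --- the Fej\'er kernel~(\ref{eq:Fejer_Closed}) is the natural candidate, with $\tau = m$ and an explicit Lipschitz constant in $\Lambda$ --- and on checking that its $\eta$-convergence rate meshes with the quantitative continuity of the density-matrix entries assumed in the paper. The union-bound overhead is otherwise harmless: $\log 4^n = O(n)$ and $m\log(L/\eta')$ are absorbed into the same polynomial in $\varepsilon^{-1}$, while $\log(1/\delta)$ enters only through Hoeffding, giving the PAC-style polynomial dependence the theorem claims.
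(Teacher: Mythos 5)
Your proposal follows essentially the same route as the paper's Appendix B: the same bias/fluctuation decomposition, the same split of the convolution integral at $\|y-x\|_2=\eta$ using uniform continuity plus the $\eta$-convergence property (III), the same Hoeffding/McDiarmid concentration with variables bounded by the kernel's sup-norm, and the same union bound over the $O(4^n)$ entries and the trace event, yielding $N=\mathrm{poly}(\varepsilon^{-1},n,\log\delta^{-1})$ with $\Lambda=\mathrm{poly}(\varepsilon^{-1})$. Your one addition --- the $\eta'$-net over ${\cal X}$ with a Lipschitz bound on $K_\Lambda$ to make the concentration genuinely uniform in $x$ --- is a step the paper does not spell out (its Hoeffding bound is stated pointwise in $x$ before the uniform claim), and it strengthens rather than alters the argument, since the net's $m\log(L/\eta')$ overhead is absorbed into the same polynomial.
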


While the computational time can generically depend exponentially on the number of qubits, because the number of entries of the density matrix is of ${\cal{O}} \left( 4^n\right)$, this may be reduced to $\mathrm{poly} (n)$ for sparse states. We emphasize that $\varepsilon$ represents the uniform prediction error for the whole parameter space. Theorem~\ref{thm:LearningDM} guarantees that efficient learning and prediction are performed physically by (approximately) preserving the fundamental properties of the density matrix. For more discussions about the continuity of the density matrix, see Appendix~\ref{sec:SmoothParametrization}.

Motivated by the importance as well as the hardness of computing the ground-state properties for generic quantum many-body Hamiltonians, we next relax to the target of learning a state property, 
\begin{eqnarray}
f_O (x) = \mathrm{Tr} \left[ O \rho(x)\right] = \sum_{i=1}^{M} f_i(x),
\end{eqnarray}
a continuous function on ${\cal{X}}$ given by the expectation value of a $q$-local operator $O = \sum_{i=1}^{M} O_i$, with respect to the quantum many-body state $\rho(x)$. The $i$th-partite local state property reads 
\begin{eqnarray}
f_i(x) = \mathrm{Tr} \left[ O_i \rho(x)\right],
\end{eqnarray}
with $i \in [M] = \{1, 2, \cdots, M \}$, where $M$ is a positive integer scaling polynomially in $n$, and the local function is bounded by $\left| f_i (x) \right| \le {\cal{O}} (1)$.

\begin{figure*}[t]
\centerline{\includegraphics[height=2.5in,width=6in,clip]{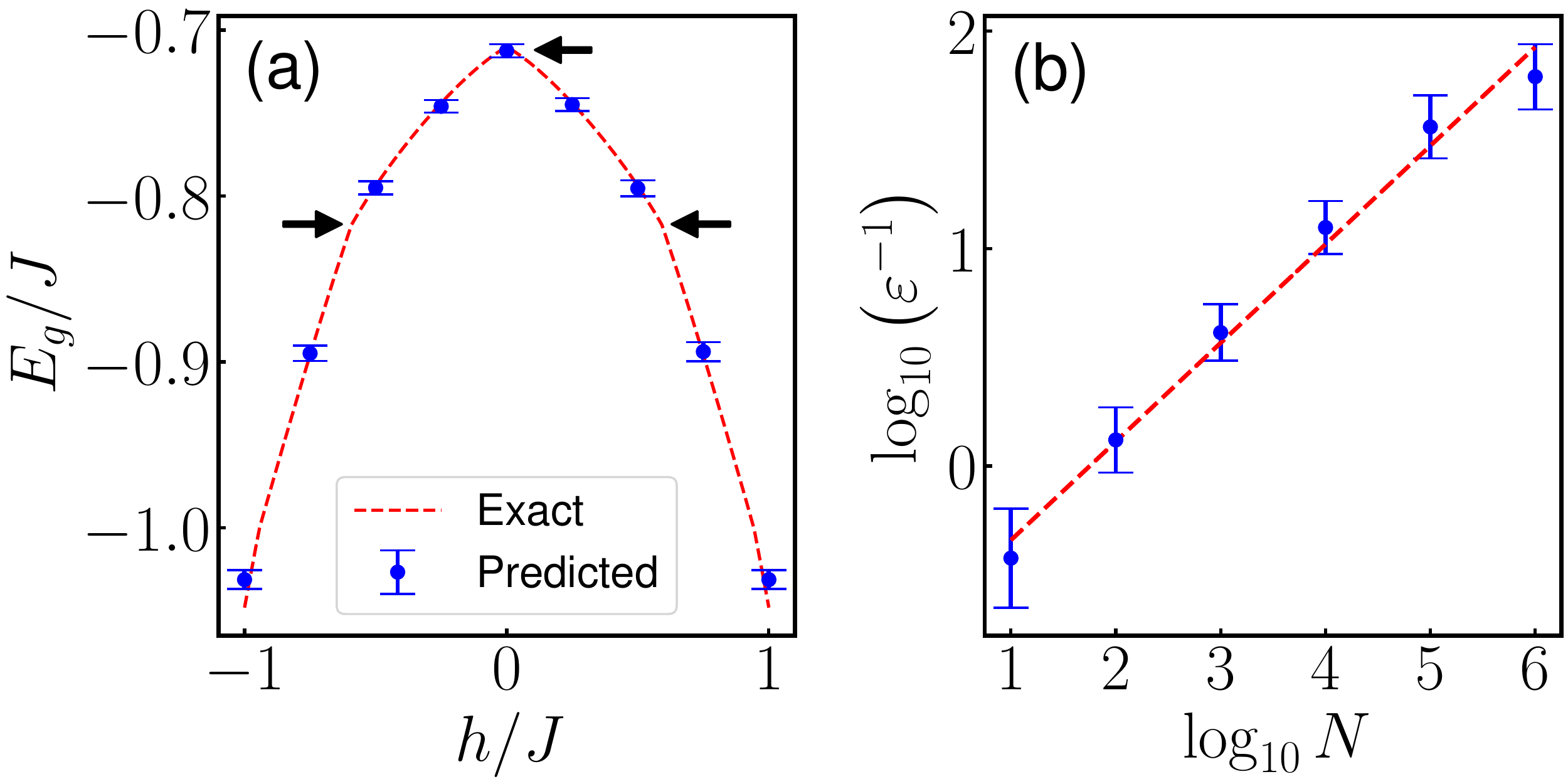}}
\caption{(a) Comparison between the exact (red dashed curve) ground-state energies per qubit and the positive good kernel (PGK) predicted (blue dots) ones, for a finite-size one-dimensional quantum $XY$ model with sample size $N = 10^6$. The black arrows indicate the locations where the energy curve is \emph{continuous but not smooth} due to the vacua competition~\cite{PasqualePRA2009,OkuyamaPRE2015}. Note that there are two additional but not obvious points in the vicinity of $h/J = \pm 1$ where the energy curve is also continuous but not smooth due to the vacua competition, which are not indicated in this figure. 
(b) Double-logarithmic plot of $\varepsilon^{-1}$ vs $N$ (blue dots), and a linear regression (red dashed line) with a slope $\approx 0.45$ and an R-squared score $\approx 0.99$, confirming a polynomial scaling of the sample complexity. We use the rectangular Fej\'er kernel with $\Lambda = 50$, $n=5$ qubits, $\gamma = 1/3$, and samples from a uniform distribution in both (a) and (b). The error $\varepsilon$ in (b) is obtained as the maximal error between the predicted and true energies. The mean values and error bars in both panels are calculated with $30$ independent runs.
}
\label{fig:XY_model}
\end{figure*}

\begin{theorem}[Efficient learning of quantum-state properties with positive good kernels (PGKs)]
\label{thm:LearningGSP}
\textbf{Given:}
A training set of size $N$ efficiently prepared as ${\cal{S}} = \{x_i, f_O (x_i) \}_{i=1}^N$.
\medskip
\newline
\textbf{Goal:} Predict the quantum-state property 
\begin{eqnarray}
\hat{f}_O (x) =  \mathrm{Tr} \left[ O \sigma_N (x)\right] = \sum_{i=1}^{M} \hat{f}_i(x)
\end{eqnarray}
via the positive $\sigma_N (x)$ with unit trace in $(\ref{eq:Kernel_DM})$, with an as small as possible number of samples $N$, and such that it is close to the true values of $f_O (x)$ in terms of the uniform prediction error 
\begin{eqnarray}
\underset{x \in {\cal{X}}}{\sup} \left| \hat{f}_O (x) - f_O (x)\right| \le \varepsilon,
\end{eqnarray}
with probability at least $1-\delta$ $(0 < \delta < 1)$.
\medskip 
\newline
\textbf{Result (Sample complexity):} The above goal can be achieved with only 
\begin{eqnarray}
\label{eq:Nsample_fo}
N = \mathrm{poly} \left(\varepsilon^{-1}, \ n, \ \log \frac{1}{\delta}\right).
\end{eqnarray}
Moreover, if locality is assumed, where the learning only requires $\underset{x \in {\cal{X}}}{\sup} \left| \hat{f}_i (x) - f_i (x)\right| \le \varepsilon$ $(\forall i \in [M])$ with probability at least $1-\delta$, then $N$ can be further reduced to 
\begin{eqnarray}
\label{eq:Nsample_fi}
N = \mathrm{poly} \left(\varepsilon^{-1}, \ \log \frac{n}{\delta}\right).
\end{eqnarray}
Now the computational time scales only (at most) polynomially in the number of qubits $n$.
\end{theorem}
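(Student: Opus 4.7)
The plan is to reduce the proof to a scalar kernel-regression estimate by exploiting the linearity of the trace. Since $\sigma_N(x) = \frac{1}{N}\sum_{i=1}^N K_\Lambda(x-x_i)\rho(x_i)$ is linear in the training states, one obtains
\begin{equation}
\hat{f}_O(x) = \mathrm{Tr}[O\,\sigma_N(x)] = \frac{1}{N}\sum_{i=1}^N K_\Lambda(x-x_i)\,f_O(x_i),
\end{equation}
so $\hat{f}_O$ is exactly the PGK estimator applied to the continuous scalar target $f_O$, which is bounded by $\|f_O\|_\infty \le M\cdot\max_i\|O_i\|_{\mathrm{op}} = \mathrm{poly}(n)$. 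The analysis then splits into a deterministic bias-versus-smoothing tradeoff (governed by the PGK properties) and a probabilistic concentration bound that has to be made uniform in $x \in \mathcal{X}$.

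First I would bound the bias $\mathbb{E}_{\{x_i\}}[\hat{f}_O(x)] - f_O(x) = \int K_\Lambda(x-y)\,[f_O(y)-f_O(x)]\,d\mu(y)$, where the rewriting uses the normalization property (II). Splitting the integration domain at $\|x-y\|_2 = \eta$, the near-diagonal contribution is controlled by the modulus of continuity $\omega_{f_O}(\eta)$, which vanishes as $\eta \to 0$ by uniform continuity of $f_O$ on compact $\mathcal{X}$; the tail is at most $2\|f_O\|_\infty \cdot \mathcal{O}(\Lambda^{-1}) = \mathrm{poly}(n)\cdot\mathcal{O}(\Lambda^{-1})$ via property (III). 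Choosing $\Lambda$ polynomially in $n$ and $\varepsilon^{-1}$ (with the precise exponent set by the rate of $\omega_{f_O}$ inherited from the smoothness of the parametrization) makes the overall bias at most $\varepsilon/2$ for every $x$.

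Next I would apply Hoeffding's inequality pointwise. By property (I), each summand $K_\Lambda(x-x_i)\,f_O(x_i)$ is bounded by $\mathcal{O}(\Lambda^\tau)\cdot\mathrm{poly}(n)$, so the event $|\hat{f}_O(x) - \mathbb{E}\hat{f}_O(x)| > \varepsilon/2$ has probability at most $2\exp\bigl(-c\,N\varepsilon^2/[\Lambda^{2\tau}\,\mathrm{poly}(n)]\bigr)$. To promote this pointwise bound to a uniform bound over $\mathcal{X}$, I would cover $\mathcal{X}$ by an $\eta'$-net of cardinality $(L/\eta')^m = \mathrm{poly}(n/\varepsilon)$ (possible because $m$ is a constant) and use that $\hat{f}_O$ is Lipschitz in $x$ with constant $\mathrm{poly}(\Lambda)$, so a pointwise $\varepsilon/4$ bound on the net extends via Lipschitzness to a uniform $\varepsilon/2$ bound. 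A union bound over the net then yields the stated $N = \mathrm{poly}(\varepsilon^{-1}, n, \log\frac{1}{\delta})$.

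For the local strengthening I would run the same scalar argument term by term for each $f_i(x) = \mathrm{Tr}[O_i\,\rho(x)]$. Since $|f_i|\le \mathcal{O}(1)$ without any $n$-dependence, the $\mathrm{poly}(n)$ prefactor disappears from both the bias and the concentration bound, and the only $n$-dependence is re-introduced through the final union bound over the $M = \mathrm{poly}(n)$ local terms, which contributes a factor $\log(M/\delta) = \mathcal{O}(\log(n/\delta))$ and gives $N = \mathrm{poly}(\varepsilon^{-1}, \log\frac{n}{\delta})$. The main obstacle will be balancing the three error sources—bias, statistical fluctuation, and discretization of $\mathcal{X}$—by a single polynomial choice of $\Lambda$, net resolution, and $N$; this requires a quantitative handle on the modulus of continuity of $\rho(x)$, which is implicit in the $\mathrm{poly}$ scaling but whose constants determine the exponents in the final sample complexity.
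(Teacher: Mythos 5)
Your proposal is correct and follows the same skeleton as the paper's proof: reduce $\hat{f}_O(x)=\mathrm{Tr}[O\sigma_N(x)]$ to the scalar PGK estimator $\frac{1}{N}\sum_i K_\Lambda(x-x_i)f_O(x_i)$, split the error into a convolution bias $|f_O*K_\Lambda(x)-f_O(x)|$ controlled by uniform continuity plus the tail property (III), and a statistical fluctuation controlled by Hoeffding with summands bounded by $\mathcal{O}(\Lambda^\tau)\,\mathrm{poly}(n)$; the local refinement via a union bound over the $M=\mathrm{poly}(n)$ terms contributing $\log(M/\delta)$ is also exactly the paper's argument. The one genuine difference is your treatment of uniformity in $x$: the paper applies the concentration inequality at a fixed $x$ and then asserts the bound ``$\forall x\in\mathcal{X}$ with probability $1-\delta$'' without an explicit uniformization step, whereas you promote the pointwise bound to a supremum bound by a covering net of $\mathcal{X}$ of cardinality $(L/\eta')^m=\mathrm{poly}(n,\varepsilon^{-1})$ combined with the $\mathrm{poly}(\Lambda)$-Lipschitz continuity of the kernel estimator, paying only an extra $\log$ of the net size. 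This is a more careful (and strictly speaking necessary) route to the stated $\sup_{x}$ guarantee, and it costs nothing in the final scaling since $m$ is constant; its only caveat is that the Lipschitz bound on $\hat{f}_O$ in $x$ must be checked for the specific PGK used (it holds for the Fej\'er and Gaussian kernels the paper works with, but does not follow from properties (I)--(III) alone). Your closing remark that the final exponents hinge on a quantitative modulus of continuity (Lipschitz or H\"older, giving $\eta=\mathcal{O}(\varepsilon^k)$) matches the paper's ``Notation and definition'' discussion.
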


For a summary of the main ideas and the detailed proofs of the two theorems, see Appendix~\ref{sec:ProofThms}.

\begin{figure*}[t]
\centerline{\includegraphics[height=2.5in,width=6in,clip]{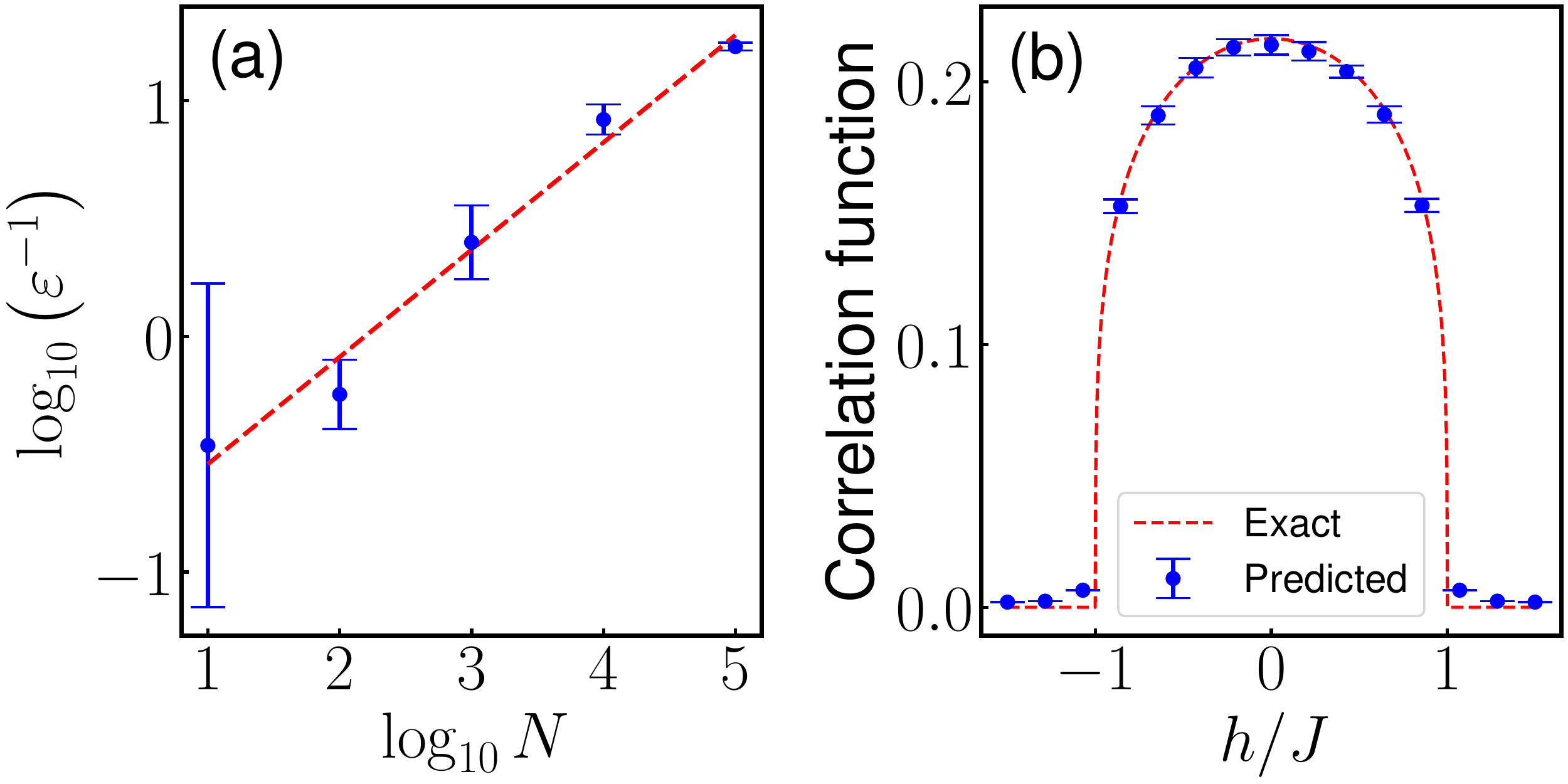}}
\caption{Predicting the ground-state long-range order (spin-spin correlation function) $\lim_{r \rightarrow \infty}(-1)^r \langle S_0^x S_r^x\rangle$ of the quantum $XY$ model in the thermodynamic limit.
(a) Double-logarithmic plot of the positive good kernel (PGK) prediction accuracy $\varepsilon^{-1}$ vs $N$ (blue dots) within the two-dimensional parameter space $-3/2 \le h/J \le 3/2$ and $0 \le \gamma \le 1$. A linear regression (red dashed line) with a slope $\approx 0.45$ and an R-squared score $\approx 0.98$ confirms a polynomial scaling of the sample complexity. 
(b) Comparison between the exact (red dashed curve) long-range spin-spin correlation function and the PGK predicted (blue dots) one with $N = 10^5$, for a quantum $XY$ chain with one-dimensional parameter space $-3/2 \le h/J \le 3/2$ and a fixed value of $\gamma = 1/3$. We use uniform distributions, a Fej\'er kernel with $\Lambda = 50$, and $30$ independent runs as in Fig.~\ref{fig:XY_model}. 
}
\label{fig:XY_CorrelationFunction}
\end{figure*}

\section{Comparison with existing results}
We compare the provable polynomial sample complexity here with that scaling exponentially with respect to the prediction accuracy while being restricted to learning the ground-state properties $f_O(x)$ in~\cite{HuangScience2022Manybody}. With a unified error metric and the same notation, the provable sample complexity in~\cite{HuangScience2022Manybody} is of the order 
\begin{eqnarray}
N_0 =\frac{B^2}{\varepsilon^2} (2m + 1)^{\frac{1}{\varepsilon^2}}.
\end{eqnarray}
For instance, taking $m = 2$, the number of qubits $n=100$, a $q$-local operator $O$ such that $M = {\cal{O}}(n^q)$ (taking $q = 1$), and $\varepsilon = 0.1$, we then have $N/N_0 \approx 10^{-48}$ with a Fej\'er kernel, and $N/N_0 \approx 10^{-61}$ with a Gaussian kernel, respectively, where $N$ denotes the number of samples in (\ref{eq:Nsample_fo}) (see more details in Appendix~\ref{sec:ComparisonExistingResults}). This demonstrates that the rigorous guarantees provided here are more powerful when $m$ is not large and the prediction precision is a central concern. 

\section{Numerical demonstration} 
While our rigorous results target experimentally obtained training sets, here we demonstrate the above findings regarding learning the ground-state properties with a one-dimensional ($1$D) quantum $XY$ model~\cite{SachdevQPTBook,PasqualePRA2009,OkuyamaPRE2015}. The Hamiltonian is given by 
\begin{eqnarray}
H = -J \sum_{i=1}^{n} \frac{1 + \gamma}{2} \sigma_i^x \sigma_{i+1}^x + \frac{1 - \gamma}{2} \sigma_i^y \sigma_{i+1}^y + \frac{h}{J} \sigma_i^z, \ \ \
\end{eqnarray}
where $\sigma^{x, y, z}_i$ are the Pauli matrices of the $i$-th qubit, $J > 0$ and $h$ parametrize the nearest-neighbour interaction and the transverse field, respectively, and $0 \le \gamma \le 1$ is the anisotropic parameter. We assume periodic boundary conditions for the qubits. This model is important for studying quantum many-body physics and is exactly solvable in the dual fermion picture~\cite{JordanZphysik1928,BravyiAnnPhys2002}. The latter allows us to test our theoretical results, for example, when $f_O(x)$ is the ground-state energy or the correlation function. 

An interesting feature of the $1$D quantum $XY$ model for finite sizes is that there are two fermion number parity sectors in the Jordan-Wigner fermion picture~\cite{PasqualePRA2009,OkuyamaPRE2015}, corresponding to periodic and anti-periodic boundary conditions of fermions, respectively. The ground state of the model is the winner of the competition between the vacuum states of the two parity sectors. Therefore, there can be gapless points where the ground-state energy curve is \emph{continuous but not smooth} within the parameter range $| h/J | \le 1$ (e.g., see the locations indicated by black arrows in Fig.~\ref{fig:XY_model} (a)). Note that this situation is not covered by the results in~\cite{HuangScience2022Manybody}.

Figure~\ref{fig:XY_model} (a) shows a comparison between the PGK predicted (blue dots) and exact (red dashed curve) energies $E_g / J$; The double-logarithmic plot in Fig.~\ref{fig:XY_model} (b) depicts the scaling of $\varepsilon^{-1}$ with $N$ (blue dots), where a good linear behavior (red dashed line) indicates a polynomial sample complexity. Note that, with a sufficiently large fixed value of $\Lambda$, the generalization error is dominant, then $N \approx {\cal{O}}\left( \varepsilon^{-2}\right)$ is observed, as expected from our results (see Appendix~\ref{sec:BoundGeneralizationError}). We use $n = 5$ qubits and $\gamma = 1/3$ to sharpen the non-smooth behavior of the energy curve. The excellent agreement between exact and predicted ground-state energies confirms that our provably rigorous guarantee is not limited to smoothly parametrized or gapped Hamiltonian systems as in~\cite{HuangScience2022Manybody}, but can apply to more general quantum systems.

As a second example, we provide results for predicting the long-range spin-spin correlation functions of the quantum $XY$ model with one- and two-dimensional parameter spaces, respectively. The target is to learn the ground-state two-point function~\cite{BarouchPRA1971} 
\begin{eqnarray}
\lim_{r \rightarrow \infty}(-1)^r \langle S_0^x S_r^x\rangle, 
\end{eqnarray}
where the spin operators are given by $S_i^{x, y, z} = \sigma_i^{x, y, z}/2$ (set $\hbar = 1$), and the training set is sampled from the exact solution in the thermodynamic limit~\cite{BarouchPRA1971}.

In Fig.~\ref{fig:XY_CorrelationFunction} (a) we present a double-logarithmic plot of the inverse uniform error $\varepsilon^{-1}$ versus the number of samples $N$ (blue dots), obtained within the two-dimensional parameter space $(h, \gamma)$. The linear scaling (red dashed line) indicates a polynomial sample complexity. In Fig.~\ref{fig:XY_CorrelationFunction} (b), the comparison between the exact (red dashed curve) and the predicted (blue dots) correlation functions is plotted, with the parameter $h$ varying and a fixed value of $\gamma$. More details can be found in the caption of Fig.~\ref{fig:XY_CorrelationFunction}. We note that, in both Figs.~\ref{fig:XY_CorrelationFunction} (a, b), the range of parameters includes values where the correlation function has a divergent or ill-defined gradient (i.e., it is not smooth), e.g., at $|h/J| = \pm 1$ in Fig.~\ref{fig:XY_CorrelationFunction} (b). This is another strong evidence that our method is applicable to more general cases beyond gapped Hamiltonians as well as properties of the system within the same phase.

\section{Learning quantum-state properties in reproducing kernel Hilbert space (RKHS)}
Now we further elaborate the kernel model, and exploit the theory of generalization in machine learning~\cite{FoundationsMLBook}, to explicate the result in Theorem~\ref{thm:LearningGSP}. 
We recall that in~\cite{SVMBook2008,arxiv2101.11020} each symmetric and positive definite kernel $K_{\Lambda} (x, x^{\prime})$ defined on ${\cal{X}} \times {\cal{X}}$ is associated to a unique RKHS ${\cal{H}}_{K_{\Lambda}}$ (see a brief review in Appendix~\ref{sec:ReviewRKHS}), with the reproducing property 
\begin{eqnarray}
f(x) = \left\langle f, K_{\Lambda} (x, )\right \rangle_{{\cal{H}}_{K_{\Lambda}}}, 
\end{eqnarray}
for all $f \in {\cal{H}}_{K_{\Lambda}}$. It is straightforward to verify that the estimator for the state property $\hat{f}_O(x)$ given by (\ref{eq:Kernel_DM}) lies in the RKHS ${\cal{H}}_{K_{\Lambda}}$. For universal kernels, the corresponding RKHS is a suitable ansatz space for learning the quantum-state properties. 

Provided a training set ${\cal{S}} = \{x_i, f_O(x_i) \}_{i=1}^N$, the optimal function $\hat{f} \in {\cal{H}}_{K_{\Lambda}}$ can be found by minimizing the empirical error
\begin{eqnarray}
{\cal{E}}_t = \frac{1}{N} \sum_{i=1}^N \left| \hat{f}_O(x_i) - f_O(x_i)\right|,
\end{eqnarray}
with a bounded functional norm $\| \hat{f}_O \|_{{\cal{H}}_{K_{\Lambda}}} \le \lambda_f$. The representer theorem~\cite{SVMBook2008,FoundationsMLBook} states that the optimal solution admits the form of 
\begin{eqnarray}
\hat{f}_O(x) = \sum_{i=1}^N \alpha_i K_{\Lambda}(x - x_i), 
\end{eqnarray}
where the real coefficients $\{\alpha_i\}$ are dual variables determined by minimizing ${\cal{E}}_t$.

To obtain a theoretical guarantee with the PGK estimator in (\ref{eq:Kernel_DM}), the dual variables can be set to be $\alpha_i \approx f_O(x_i) /N$. Moreover, according to the generalization theory of statistical learning, the expected error in terms of the $L_1$-norm, 
\begin{eqnarray}
{\cal{E}}_p = \underset{x \sim \mu}{\mathbb{E}} \left| \hat{f}_O(x) - f_O(x)\right|, 
\end{eqnarray}
is upper bounded by (with probability at least $1-\delta$) (see more details in Appendix~\ref{sec:GSPinRKHS}), 
\begin{eqnarray}
\label{eq:PredictionErrorRKHS}
{\cal{E}}_p \le {\cal{E}}_t + \frac{8 \lambda_f R}{\sqrt{N}} \sqrt{\frac{\log \frac{2}{\delta}}{2}},
\end{eqnarray}
provided that $K_{\Lambda}(x, x) \le R^2$ for all $x \in {\cal{X}}$.

With a PGK estimator, the empirical error ${\cal{E}}_t$ can be upper bounded at ${\cal{O}}(\varepsilon)$ with large probability, given a polynomial $N$ as in (\ref{eq:Nsample_fo}) (see Appendix~\ref{sec:GSPinRKHS} for details). Moreover, the second term on the right-hand side of~(\ref{eq:PredictionErrorRKHS}) can also be consistently bounded by ${\cal{O}}(\varepsilon)$. We then obtain a consistent but relaxed result with respect to Theorem~\ref{thm:LearningGSP}, i.e., the expected prediction error can be of ${\cal{O}}(\varepsilon)$, \emph{requiring only a polynomially efficient sample complexity}.

\section{Conclusion} 
We have provided a theoretical guarantee for \emph{polynomially} efficient machine learning quantum many-body states and their properties, provided that efficient access to the training set is available (e.g., via classical shadow tomography), and that the dimension of the parameter space is not necessarily a large constant.

The key idea underlying our finding is to jointly utilize the continuity of quantum states in the parameter space and to preserve the fundamental properties of the density operator in the learning protocol. Our provably rigorous results reveal that a polynomial sample complexity in terms of the 
\emph{uniform} prediction error and the qubit number is possible, allowing us to predict the quantum states and their properties efficiently and accurately. When restricted to learning local quantum-state properties, the dependence of the number of samples on the number of qubits can be further reduced to $\mathrm{poly} (\log n)$. 

We emphasize again that our results are not restricted to learning smooth ground states of gapped Hamiltonians as in~\cite{HuangScience2022Manybody}. Because our mere continuity assumption of quantum states is more general, it also applies to gapless Hamiltonian systems or continuously parametrized mixed states and their properties (e.g., quantum states smoothly parametrized by Kraus operators~\cite{NielsenChuang}, and quantum metrological states with finite amounts of quantum Fisher information~\cite{Helstrom,Holevo,MaPhysRep2011,ChePRA2019,LiuJPAReview2020,RinaldiQST2024}; see Appendix~\ref{sec:SmoothParametrization} for more details). An alternative approach from the standard generalization theory of statistical learning, based on the complexity of the RKHS, also achieves consistent results.


\textit{Note added.} After this work was completed, two related preprints~\cite{LewisNC2024,arxiv2301.12946} appeared, which obtained a quasi-polynomial sample complexity, while focusing on learning the average of observables under the assumption of geometric locality, accordingly with more restricted and specific learning models.

\emph{Acknowledgments.} 
C.G. is partially supported by a RIKEN Incentive Research Grant. 
F.N. is supported in part by: Nippon Telegraph and Telephone Corporation (NTT) Research, 
the Japan Science and Technology Agency (JST) [via the Quantum Leap Flagship Program (Q-LEAP), 
and the Moonshot R\&D Grant Number JPMJMS2061], 
the Asian Office of Aerospace Research and Development (AOARD) (via Grant No. FA2386-20-1-4069), 
and the Office of Naval Research (ONR) Global (via Grant No. N62909-23-1-2074).

\clearpage
\appendix
\setcounter{figure}{0}
\renewcommand{\thefigure}{A\arabic{figure}}

\begin{widetext}
\section{List of abbreviations and symbols} 
\label{sec:Symbols}

For the readers' convenience, here in Table~\ref{tb:ListSymbols} we list the most-often-used abbreviations and symbols in this work.

\begin{table*}[b]
\centering
\caption{List of abbreviations and symbols. }
\medskip
\begin{tabular}{c l}
\hline \hline
Abbreviation   &   \hskip 30ex    Full form \\ \hline
PAC-learnable &  \hskip 5ex  Probably Approximately Correct-learnable. \\
PGK &  \hskip 5ex  Positive good kernel. \\
NTK &  \hskip 5ex  Neural tangent kernel. \\
RKHS &  \hskip 5ex  Reproducing kernel Hilbert space. \\
A-G-P &  \hskip 5ex  Approximation-Generalization-Prediction. \\
\hline \hline
Symbol   &   \hskip 30ex    Description \\ \hline
${\cal{X}}$ &  \hskip 5ex  Input parameter space, with ${\cal{X}} \subset \mathbb{R}^m$. \\ 
$m$ &  \hskip 5ex  Dimension of the parameter space. \\ 
$L$ &  \hskip 5ex  Length of ${\cal{X}}$ in each dimension.  \\ 
$\mu$ &  \hskip 5ex  Probability measure.  \\ 
${\cal{S}}$ &  \hskip 5ex  Training set. \\
$N$ &  \hskip 5ex  Number of samples/Size of the training set.  \\ 
$n$ &  \hskip 5ex  Number of qubits in the system.  \\ 
$\varepsilon$ &  \hskip 5ex  Prediction error bound of the machine learning model.  \\ 
$\delta$ &  \hskip 5ex  Probability of failure ($1-\delta$ is the success probability).  \\ 
$\eta$ &  \hskip 5ex  Quantity used in the definition of continuous functions.  \\ 
$\Lambda$ &  \hskip 5ex  Index of the kernel/Scale of cutoff.  \\ 
${\cal{K}}_{\Lambda} (x, x^{\prime})$ &  \hskip 5ex  Kernel defined on ${\cal{X}} \times {\cal{X}}$.  \\
${\cal{K}}_{\Lambda} (x-x^{\prime})$ &  \hskip 5ex  Kernel with translational symmetry.  \\ 
${\cal{M}}$ &  \hskip 5ex  Upper bound of the $L_1$ norm of the kernel.  \\ 
$\sigma_N(x)$ &  \hskip 5ex  Kernel density matrix estimator \ $\forall x \in {\cal{X}}$.  \\ 
$\mathrm{Tr}$ &  \hskip 5ex  Trace.  \\ 
$\mathrm{poly}(\alpha, \gamma, \dots)$ &  \hskip 5ex  Polynomial function of $\alpha$, $\gamma$, etc.  \\ 
$d_{\Lambda}(f)$ &  \hskip 5ex  Distance (error) from the convolution approximation of $f$ with PGKs.  \\ 
$d_N(f)$ &  \hskip 5ex  Distance (error) from the generalization over the training set.  \\ 
$O$ &  \hskip 5ex  $q$-local observable defined as $O = \sum_{i=1}^M O_i$, with $M = {\cal{O}}(n^q)$.  \\ 
$B$ &  \hskip 5ex  Twice the upper bound of a continuous function $f(x)$ defined on ${\cal{X}}$.  \\
$C_L$ &  \hskip 5ex  Lipschitz constant.  \\ 
${\cal{H}}_{K_{\Lambda}}$ &  \hskip 5ex  Reproducing kernel Hilbert space (RKHS) associated with kernel $K_{\Lambda}$.  \\
$\alpha_i$ &  \hskip 5ex  Dual variables for $i = 1, 2, \cdots, N$.  \\ 
$\hat{f}_O(x)$ &  \hskip 5ex  Kernel estimator of the expectation value of $O$ with respect to the quantum state $\rho(x)$.  \\ 
$\lambda_f$ &  \hskip 5ex  Upper bound of the functional norm of $\hat{f}_O(x)$ in the RKHS.  \\ 
${\cal{E}}_p$ &  \hskip 5ex  Expected error in terms of $L_1$ norm.  \\ 
${\cal{E}}_t$ &  \hskip 5ex  Corresponding empirical error/training error with respect to ${\cal{E}}_p$ .  \\ 
${\cal{R}_{{\cal{S}}}}$ &  \hskip 5ex  Empirical Rademacher complexity.  \\ 
$R$ &  \hskip 5ex  Square root of the upper bound of the kernel $K_{\Lambda}$.  \\ 
$D_{\mathbf{n}}(x) $ &  \hskip 5ex  Rectangular Dirichlet kernel with the vector index $\mathbf{n}$.  \\
$F_{\Lambda}(x) $ &  \hskip 5ex  Rectangular Fej\'er kernel with the cutoff scale $\Lambda$.  \\
$K_{h}(x)$ &  \hskip 5ex  Gaussian kernel parametrized by $h$.  \\
\hline \hline
\end{tabular}
\label{tb:ListSymbols}
\end{table*}
\clearpage
\end{widetext}

\section{A unified proof for Theorems 1 and 2 in the main text}
\label{sec:ProofThms}

\subsection{Setting and summary of main ideas}
\label{sec:MainIdeas}

\textit{Summary of proof ideas.}---Because the target functions to be learned in Theorems $1$ and $2$ in the main text, such as the entries of the density matrix or the average of the local operators, are continuous functions defined on the parameter space $\cal{X}$, the proofs for the two theorems can be given in a unified framework. 

\begin{figure}[t]
\centerline{\includegraphics[height=3.5in,width=3.5in,clip]{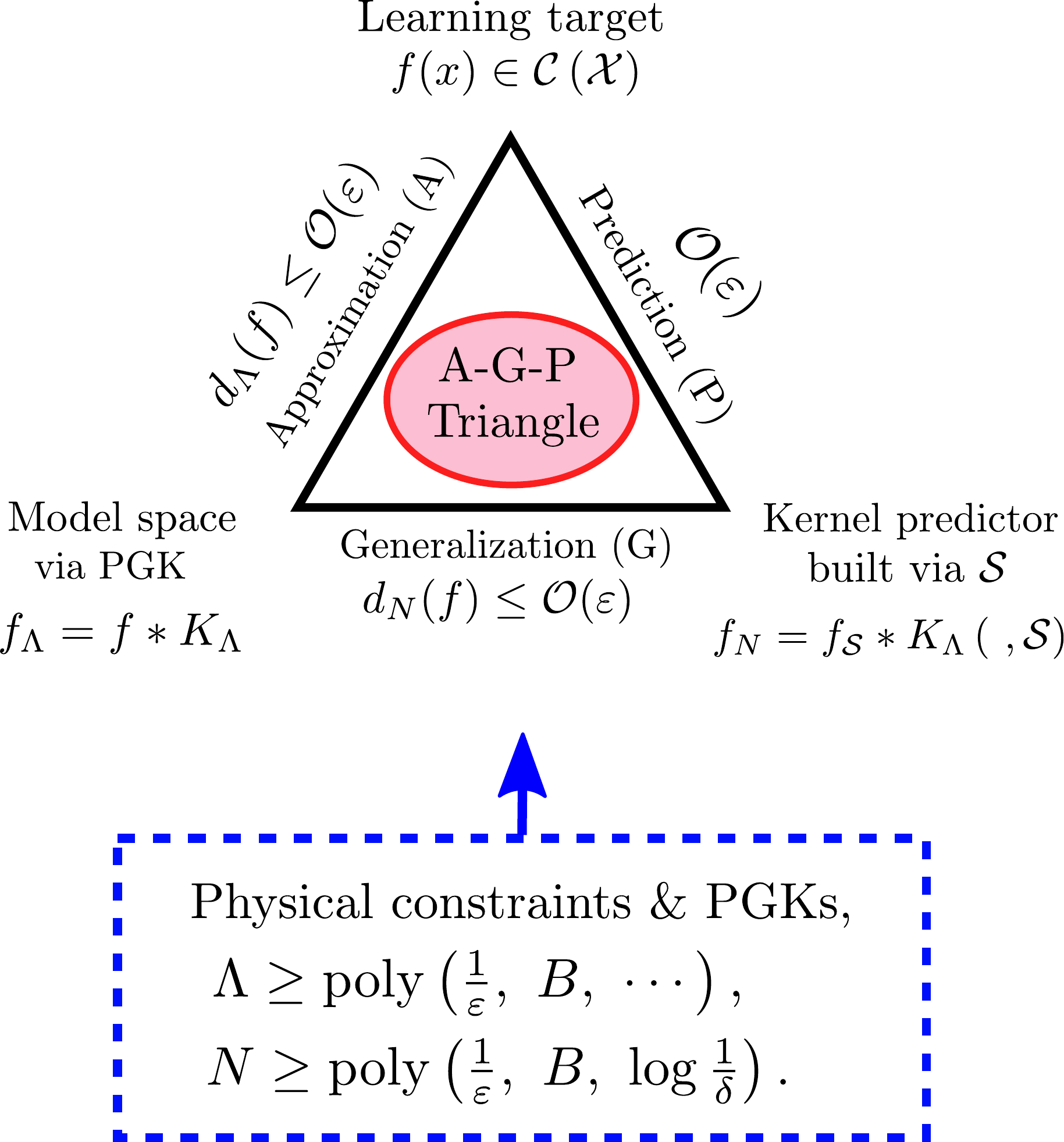}}
\bigskip
\caption{\textbf{Schematics of bounding the prediction error of positive good kernel (PGK) predictors, with polynomial number of samples.} 
The target function to be learned (e.g., entries of the density matrix, expectation values of operators with respect to a target quantum state, etc.), is a continuous function defined on the compact parameter space ${\cal{X}}$, the space of which is denoted by ${\cal{C}} ({\cal{X}})$. In the left bottom of the Approximation-Generalization-Prediction (A-G-P) triangle, the model space is obtained via the convolution (denoted by a $*$) between the function itself and a PGK $K_{\Lambda}$. In the right bottom, the kernel predictor $f_N$ is defined as a discrete convolution over the sample set ${\cal{S}}$ of size $N$. With the physical constraints and proper PGKs, if both the approximation error $d_{\Lambda}(f)$ and the generalization error $d_N(f)$ can be bounded from above up to a small error bound of ${\cal{O}}(\varepsilon)$, in a consistent manner and with polynomial number of samples $N$, then the prediction error can be bounded at ${\cal{O}}(\varepsilon)$, according to the triangle inequality for distances and norms. Note that $B$ is twice the upper bound of the target function $f(x)$ for $x \in {\cal{X}}$ (related to the number of qubits $n$), and $\delta$ is the probability of success with respect to ${\cal{S}}$.
}
\label{fig:ProofSchematic}
\end{figure}

The key idea relies on the concept of good kernels~\cite{FourierBookStein} (We added the positivity condition for good kernels to predict density matrices and their properties. Then the $L_1$-boundness condition for good kernels is automatically satisfied due to the normalization of the kernel). The convolution between a sequence of good kernels and a continuous target function on a closed interval can approximate the function itself up to an arbitrary \emph{uniform} error at ${\cal{O}}(\varepsilon)$ (also known as the ``approximation to the identity")~\cite{FourierBookStein}. On the other hand, the kernel convolution can be approximated as the sample mean over the training set of size $N$ drawn from the parameter space ${\cal{X}}$ and the target $f(x)$ [e.g., see Eq. (1) in the main text], up to a generalization error ${\cal{O}}(\varepsilon)$ and with a probability of success (at least) $1-\delta$. If the two approximation errors can be consistently controlled (for all $x \in {\cal{X}}$) with a polynomially scaling number of samples (relative to $\varepsilon^{-1}$, the qubit number $n$, and $\log \delta^{-1}$), then Theorems $1$ and $2$ follow naturally. Note that we have used the Heine-Cantor theorem which states that a continuous function defined on a compact metric space is uniformly continuous. Here the input space ${\cal{X}}$ is a compact metric space. See Fig.~\ref{fig:ProofSchematic} for a schematic illustration of the main idea. We will elaborate the details of the proof in the following, where we use positive good kernels such as the multi-dimensional Fej\'er kernel and the Gaussian kernel, and the uniform probability distribution. The results can remain unchanged with non-uniform distributions. (This completes the summary of the proof.)

Without loss of generality, we set the input parameter space as $ {\cal{X}} = [-\frac{L}{2}, \frac{L}{2}]^m \subset \mathbb{R}^m$, an $m$-dimensional tensor product of closed intervals of length $L$. The target to learn is a 
\emph{continuous} real function $f: {\cal{X}} \to \mathbb{R}$, defined on a circle of length $L$ in each dimension (i.e., with periodic boundary conditions in the parameter space).
As in the main text, we restrict ourselves to kernels with translational symmetry. For a sequence of kernels $\{ K_{\Lambda} (x)\}_{{\Lambda} \in {\cal{I}}}$ defined on the parameter space ${\cal{X}}$, with $ {\cal{I}}$ the index set of integers by default, we first use the uniform distribution for the probability measure on $\cal{X}$ (the case with non-uniform distributions will be discussed later), i.e., $\mathrm{d}\mu(x) = \mathrm{d} x /L^m$. Then, the properties for positive good kernels read: 

\begin{enumerate}[(I)]
\item Positivity and boundness: $0 \le K_{\Lambda}(x) \le {\cal{O}}\left(\Lambda^{\tau} \right)$ ($\forall x \in {\cal{X}}$), with $\tau$ some positive integer; 

\item Normalization: $\frac{1}{L^m} \int_{x \in {\cal{X}}} K_{\Lambda} (x) \mathrm{d}x = 1$;

\item $\eta$-convergence: For all $0 < \eta \le L$, $\frac{1}{L^m} \int_{x \in {\cal{X}}, \ \| x\|_{2} \ge \eta} | K_{\Lambda} (x) | \mathrm{d}x \le {\cal{O}}\left( \Lambda^{-1}\right)$ as $\Lambda \to \infty$.
\end{enumerate}

We first use a Fej\'er kernel~\cite{FourierBookStein,Pfister2019BoundingMT} as an example of the proof. The rectangular Fej\'er kernel, obtained as the mean of a sequence of rectangular Dirichlet kernels, is given by~\cite{Pfister2019BoundingMT} 
\begin{eqnarray}
\label{eq:Fejer_Kernel_SI}
F_{\Lambda} (x) = \frac{1}{\Lambda^m} \sum_{\mathbf{n} \in V_{\Lambda}} D_{\mathbf{n}} (x), 
\end{eqnarray}
where the kernel index set $V_{\Lambda} = \{ \mathbf{n} = \left( n_1, n_2, \cdots, n_m\right) \in \mathbb{Z}^m : 0 \le n_i < \Lambda, \forall i \in [1, 2, \cdots, m] \}$, and the rectangular Dirichlet kernel reads 
\begin{eqnarray}
\label{eq:rectangular_Dirichlet_SI}
D_{\mathbf{n}} (x) =  \sum_{k_1 = -n_1}^{n_1} \sum_{k_2 = -n_2}^{n_2} \cdots \sum_{k_m = -n_m}^{n_m} e^{2 \pi k \cdot x /L}, 
\end{eqnarray}
where the wave vector $k = \left( k_1, k_2, \cdots, k_m\right)$ takes integer entries.
An equivalent but closed expression for the Fej\'er kernel above is 
\begin{eqnarray}
\label{eq:Fejer_Explicit_SI}
F_{\Lambda} (x) = \frac{1}{\Lambda^m} \prod_{i = 1}^m \frac{\sin^2 \left( \frac{\Lambda \pi}{L} x_i\right)}{\sin^2 \left( \frac{\pi x_i}{L} \right)}, 
\end{eqnarray}
where we have $0 \le F_{\Lambda} (x) \le \Lambda^m$. The closed form (\ref{eq:Fejer_Explicit_SI}) of the rectangular kernel significantly accelerates the numerical convolution compared to the $\ell_2$ kernels, as the latter require to first generate wave vectors whose $\ell_2$-norm does not exceed $\Lambda$, while the cardinality of the wave-vector set can be exponentially large relative to $\Lambda$~\cite{HuangScience2022Manybody}.

\begin{lemma} 
The rectangular Fej\'er kernel is a positive good kernel (PGK) with respect to the uniform distribution.
\end{lemma}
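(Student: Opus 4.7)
The plan is to verify the three defining conditions (I), (II), (III) of a PGK separately for $F_\Lambda$, exploiting the product structure of the closed form (\ref{eq:Fejer_Explicit_SI}) so that each $m$-dimensional estimate reduces to (or assembles from) a one-dimensional one. Properties (I) and (II) are essentially algebraic manipulations of the two representations (\ref{eq:Fejer_Kernel_SI}) and (\ref{eq:Fejer_Explicit_SI}); property (III), the decay estimate, is where the real work lies.

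For (I), positivity is immediate from (\ref{eq:Fejer_Explicit_SI}), since $F_\Lambda$ is a positive prefactor times a product of nonnegative quotients of squares (with the usual removable-singularity values at $x_i=0$). For the upper bound I would invoke the classical inequality $|\sin(\Lambda\theta)| \le \Lambda\,|\sin\theta|$, provable by induction on $\Lambda$ from the sine addition formula, applied at $\theta = \pi x_i/L$; squaring bounds each one-dimensional factor by $\Lambda^2$, and multiplying $m$ such factors while dividing by $\Lambda^m$ yields $F_\Lambda(x) \le \Lambda^m$. Thus (I) holds with $\tau = m$.

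For (II), I would use the Dirichlet-sum representation (\ref{eq:Fejer_Kernel_SI}) and integrate termwise. Each rectangular Dirichlet kernel $D_\mathbf{n}$ in (\ref{eq:rectangular_Dirichlet_SI}) is a sum of complex exponentials $e^{2\pi i k\cdot x/L}$ over integer wavevectors $k$, of which only the $k=0$ term survives integration over ${\cal X} = [-L/2,L/2]^m$, giving $\int_{\cal X} D_\mathbf{n}\,dx = L^m$. Summing over the $|V_\Lambda| = \Lambda^m$ indices and dividing by $\Lambda^m$ gives $\int_{\cal X} F_\Lambda\,dx = L^m$, which is precisely (II) under the uniform measure $dx/L^m$.

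The main obstacle is (III). The starting point is a pointwise one-dimensional decay estimate: on the region $|x_i| \ge \eta/\sqrt{m}$ (where periodicity allows one to restrict to $|x_i| \in [\eta/\sqrt{m},\, L/2]$) the denominator satisfies $\sin^2(\pi x_i/L) \ge \sin^2(\pi\eta/(L\sqrt{m})) > 0$, so the corresponding one-dimensional Fej\'er factor is bounded by $1/\bigl(\Lambda\,\sin^2(\pi\eta/(L\sqrt{m}))\bigr)$, a uniform $\Lambda^{-1}$ bound. I would then cover $\{\|x\|_2 \ge \eta\}$ by the union $\bigcup_{i=1}^m \{|x_i| \ge \eta/\sqrt{m}\}$, apply a union bound, and on each piece apply Fubini together with the one-dimensional normalization already established in (II) for the remaining $m-1$ ``good'' coordinates. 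The outcome is a bound of the form $C(m,\eta,L)/\Lambda$, which is ${\cal O}(\Lambda^{-1})$ for fixed $m$ and $\eta$, yielding (III). The delicate points to watch are (a) using the periodicity of $\sin^2(\pi\cdot/L)$ to correctly identify the minimum of the denominator on the restricted region, and (b) cleanly separating the one ``bad'' coordinate, on which one exploits the pointwise $\Lambda^{-1}$ decay, from the ``good'' coordinates, on which one merely uses the unit $L^1$ normalization of the one-dimensional Fej\'er kernel.
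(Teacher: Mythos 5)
Your proposal is correct and follows essentially the same route as the paper: (I) and (II) are read off from the closed product form and the Dirichlet-sum representation, and (III) is reduced to the one-dimensional estimate $\sin^{-2}\bigl(\pi\eta/(\sqrt{m}L)\bigr)/\Lambda \le 4mL^2/(\pi^2\Lambda\eta^2)$ on the region $\|x\|_\infty \ge \eta/\sqrt{m}$, exactly as in the paper's approximation-error bound (\ref{eq:FejerConv_error}), to which the lemma's proof defers. One point in your favor: your handling of (III) via a union bound over the $m$ slabs $\{|x_i|\ge\eta/\sqrt{m}\}$, using the pointwise $\Lambda^{-1}$ decay on the bad coordinate and the unit normalization on the good ones, is the rigorous way to treat the set $\{\|x\|_\infty\ge\eta/\sqrt{m}\}$, which is a union of slabs rather than a product region; the paper instead asserts a factorization yielding $(C/\Lambda)^m$, which overstates the decay, whereas your argument gives $mC/\Lambda$. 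Both are ${\cal O}(\Lambda^{-1})$ for fixed $m$ and $\eta$, so the lemma and the downstream sample-complexity scalings are unaffected.
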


\begin{proof} 
The positivity and boundness, as well as the normalization can be easily verified through the closed expression of the Fej\'er kernel. The last property (III) can be found true in the following section~\ref{sec:BoundApproxError} while bounding the approximation error in (\ref{eq:FejerConv_error}).
\end{proof}

Here the target function $f(x)$ to be learned can be either the entries of the density matrix (real or imaginary parts, respectively), or the average of observables with respect to the quantum state, $f_O (x) = \mathrm{Tr} \left( \hat{O} \rho(x) \right)$, where $\hat{O}$ is a local bounded operator as in the main text. 

\medskip
\textbf{Notation and definition.} 
Because $f(x)$ is assumed to be continuous on a compact metric space ${\cal{X}} \subset \mathbb{R}^m$, according to the Heine-Cantor theorem, it is uniformly continuous. Then, $\forall \varepsilon > 0$, there exists $\eta > 0$ such that, if $\| x - y\|_2 \le \eta$, then $\left| f(x) - f(y) \right| \le \frac{\varepsilon}{4}$ [called $\left( \frac{\varepsilon}{4}, \eta \right)$-continuous]. The function is uniformly continuous, so $\eta$ is independent of $x$, but can depend on $\varepsilon$. Furthermore, for a rich family of continuous functions, $\eta$ can be chosen to be of ${\cal{O}} (\varepsilon^k)$, where $k \ge 0$. For example, for Lipschitz continuous functions, which satisfy the Lipschitz condition 
\begin{equation}
\left| f(x) - f(y)\right| \le C_{L} \| x - y\|_2
\end{equation}
for all $x, y \in {\cal{X}}$, with $C_{L}$ the Lipschitz constant, we have $\eta = \varepsilon/\left(4 C_{L}\right)$. Or more generally, for H\"older continuous functions satisfying  
\begin{equation}
\left| f(x) - f(y)\right| \le C_{H} \| x - y\|_2^{\alpha_H},
\end{equation}
where $C_H$ is a constant and $\alpha_H \in (0, 1]$, we have $k=1/\alpha_H$. As a consequence, a function that scales polynomially in $\eta^{-1}$ is also polynomial in $\varepsilon^{-1}$. In addition, because $f(x)$ is continuous on the closed set ${\cal{X}}$, it is guaranteed to be bounded by $\left| f(x)\right| \le B/2$ for all $x \in {\cal{X}}$, for some positive constant $B$.

\subsection{Bounding the approximation error} 
\label{sec:BoundApproxError}
With positive good kernels, such as the Fej\'er kernel $F_{\Lambda}(x)$, the error from the convolutional approximation of the function is given by
\begin{eqnarray}
\label{eq:FejerConv_error}
d_{\Lambda} (f) &=& \left| f*F_{\Lambda}(x) - f(x)\right| \nonumber \\ 
&=& \frac{1}{L^m} \left| \int_{{\cal{X}}}\left[f(x-y) - f(x)\right] F_{\Lambda}(y) \mathrm{d}y\right| 
\nonumber \\ 
&\le& \frac{1}{L^m} \int_{{\cal{X}}}\left| f(x-y) - f(x)\right| \times \left| F_{\Lambda}(y) \right| \mathrm{d}y  \nonumber \\
&=& \frac{1}{L^m} \int_{{y \in \cal{X}}, \| y\|_2 \le \eta}  \left| f(x-y) - f(x)\right| \times \left| F_{\Lambda}(y) \right| \mathrm{d}y  \nonumber \\ 
&+&  \frac{1}{L^m} \int_{{y \in \cal{X}}, \| y\|_2 \ge \eta}  \left| f(x-y) - f(x)\right| \times \left| F_{\Lambda}(y) \right| \mathrm{d}y  \nonumber \\ 
&\le& \frac{\varepsilon}{4} + B \left( \frac{C}{\Lambda}\right)^m  \le \frac{\varepsilon}{2}, 
\end{eqnarray}
where 
\begin{equation}
\label{eq:DefinitionC}
C = \frac{4m L^2} {\pi^2 \eta^2}.
\end{equation}
The second equation comes from the symmetry property of the kernel convolution on ${\cal{X}}$, $f*F_{\Lambda}(x) = F_{\Lambda}*f(x)$, as well as the normalizaiton property of the Fej\'er kernel. In the third equation, the first integration is upper bounded by $\varepsilon/4$ because $f$ is $\left(\frac{\varepsilon}{4}, \eta \right)$-continuous and because the kernel is normalized. The bound for the second term in the third equation is obtained based on the fact that the integration over the region with $\| y\|_2 \ge \eta$ does not exceed the integration over the the region with $\| y\|_{\infty} \ge \eta/\sqrt{m}$, where the latter is a multi-dimensional integration over a cubic region that can be factorized into the product of the integrations in each dimension. In each factor dimension, the Fej\'er kernel is bounded from above by 
\begin{equation}
\frac{1}{\Lambda \sin^2\left[\pi \eta/(\sqrt{m} L) \right]} \le \frac{4m L^2} {\pi^2 \Lambda \eta^2 },
\end{equation}
by using $\sin x \ge x/2$ while $0 < x < \pi/3$ is a small number. Then the second inequality holds. \emph{This line of argument also serves as a proof for the statement that the multi-dimensional Fej\'er kernel satisfies the property $(\mathrm{III})$ of good kernels (see above).} 
Finally, if we set 
\begin{eqnarray}
\label{eq:Bound_Lambda}
\Lambda^m \ge \frac{4 B C^m}{\varepsilon}, 
\end{eqnarray}
it suffices to bound the total approximation error $d_{\Lambda}(f)$ up to $\varepsilon/2$, as in the last inequality.

\subsection{Bounding the generalization error}
\label{sec:BoundGeneralizationError}

The Fej\'er kernel estimator of the density matrix is given by 
\begin{eqnarray}
\label{eq:Fejer_DM_SI}
\sigma_N (x) = \frac{1}{N} \sum_{i=1}^N F_{\Lambda} (x-x_i) \rho(x_i).
\end{eqnarray}
By taking the target $f(x)$ to be the entries of the density matrix or the state properties $f_O(x)$, and with the corresponding sample set ${\cal{S}} = \{x_i, f(x_i) \}_{i=1}^N$ drawn uniformly from ${\cal{X}}$ and the function space, Eq.~(\ref{eq:Fejer_DM_SI}) is a sample mean of the kernel convolution $f*F_{\Lambda}(x)$. Equivalently, denoting a sequence of independent random variables as $\{Z_i (x) = F_{\Lambda} (x-x_i) f(x_i)\}_{i=1}^{N}$, which is bounded by $|Z_i| \le \Lambda^m B/2$, we have 
\begin{eqnarray}
\label{eq:Expectation_Z}
\underset{x_i \sim \mu \left({\cal{X}}\right)} {\mathbb{E}} \frac{1}{N} \sum_{i=1}^N Z_i (x)
&=& \frac{1}{L^m} \int_{{\cal{X}}} F_{\Lambda}(x-y) f(y) \mathrm{d}y \nonumber \\
&=& f*F_{\Lambda}(x).
\end{eqnarray}
As in Eq.~(\ref{eq:Fejer_DM_SI}), the Fej\'er kernel estimator can be written as $\hat{f}_N (x) = \frac{1}{N} \sum_{i=1}^N Z_i (x)$. The McDiarmid's inequality~\cite{FoundationsMLBook} leads to  
\begin{eqnarray}
\label{eq:Heoffding}
\mathrm{Prob.} \left( \left|\hat{f}_N (x) - f*F_{\Lambda}(x) \right| \ge \frac{\varepsilon}{2} \right) \le 2 \exp \left( - \frac{N \varepsilon^2}{2 \Lambda^{2m} B^2}\right). \nonumber \\
\end{eqnarray}
Therefore, if the number of samples is set to be 
\begin{eqnarray}
\label{eq:BoundGeneralizationErrorN}
N \ge \frac{2B^2 \Lambda^{2m}}{\varepsilon^2} \log \frac{2}{\delta},
\end{eqnarray}
then we have the generalization error 
\begin{eqnarray}
d_N (f) = \left|\hat{f}_N (x) - f*F_{\Lambda}(x) \right| \le \frac{\varepsilon}{2},
\end{eqnarray}
with probability at least $1-\delta$ ($\delta \in (0, 1)$).

\subsection{Bounding the supremum-norm prediction error with polynomial sample complexity}
\label{sec:BoundPredictionError}
Combing Eqs.~(\ref{eq:Bound_Lambda}) and (\ref{eq:BoundGeneralizationErrorN}), we can finally obtain the prediction error of the Fej\'er kernel estimator as (see Fig.~\ref{fig:ProofSchematic} for an illustration)
\begin{eqnarray}
\left|\hat{f}_N (x) - f(x) \right| \le d_N (f) + d_{\Lambda} (f) \le \varepsilon, \ \ \ \forall x \in {\cal{X}}, \nonumber \\
\end{eqnarray}
with probability at least $1 - \delta$, by using a sample set of size
\begin{eqnarray}
\label{eq:N_Fejer_SI}
N &\ge& \frac{32 B^4 C^{2m} }{\varepsilon^4} \log \frac{2}{\delta} \nonumber \\ 
&=& \mathrm{poly} \left(\frac{1}{\varepsilon}, \ B, \ \log \frac{1}{\delta}\right),
\end{eqnarray}
provided that $m$ is a finite constant, the scaling of which is not a concern. The polynomial scaling of $N$ in $\varepsilon^{-1}$ is due to the fact that $C = {\cal{O}}(\eta^{-2}) = {\cal{O}}(\varepsilon^{-2k})$, as in (\ref{eq:DefinitionC}) and in the \textbf{Notation and definition} of Sec.~\ref{sec:MainIdeas}. 

\bigskip
Now we can apply (\ref{eq:N_Fejer_SI}) to the quantum-state properties learning as in Theorem 2 of the main text, where the target function $f_O (x) = \mathrm{Tr} \left[ O \rho(x)\right] = \sum_{i=1}^{M} f_i(x)$, is a continuous function on ${\cal{X}}$ given by the average of a local operator $O = \sum_{i=1}^{M} O_i$, as in the main text. The bound of $f_O(x)$ here is $B = {\cal{O}} (M) = \mathrm{poly}(n)$, provided that $\left| f_i (x) \right| \le {\cal{O}} (1)$ as in the main text.

\bigskip
So far we have focused on approximating a single function. Next we consider to simultaneously bound the approximation errors up to $\varepsilon$ for a set of continuous functions $\{f_i (x) \}_{i=1}^M$ with $x \in \cal{X}$, with the Fej\'er kernel predictor as above. This situation applies to, for example, learning the overall density matrix as in Theorem 1 and learning the quantum-state properties with locality assumptions as in Theorem 2. 

Define the event ${\cal{A}}_i$ to be $\left|\hat{f}_{i N} (x) - f_i*F_{\Lambda}(x) \right| \le \frac{\varepsilon}{2}$, then 
\begin{eqnarray}
\label{eq:ProbIntersection}
&\mathrm{Prob.}& \left( \left|\hat{f}_{i N} (x) - f_i*F_{\Lambda}(x) \right| \le \frac{\varepsilon}{2}, \ 
\forall i \in [M] \right) \nonumber \\ &=& \mathrm{Prob.} \left( \bigcap_{i=1}^M {\cal{A}}_i \right) \nonumber \\ 
&=& 1 - \mathrm{Prob.} \left( \bigcup_{i=1}^M {\cal{A}}^c_i \right), 
\end{eqnarray}
where ${\cal{A}}^c_i = : \left|\hat{f}_{i N} (x) - f_i*F_{\Lambda}(x) \right| \ge \frac{\varepsilon}{2}$ is the complementary event of ${\cal{A}}_i$. Also, we have the following relation, 
\begin{eqnarray}
\label{eq:ProbUnion}
\mathrm{Prob.} \left( \bigcup_{i=1}^M {\cal{A}}^c_i \right) &\le& \sum_{i=1}^M \mathrm{Prob.} \left( {\cal{A}}^c_i \right) \nonumber \\ 
&\le& M \underset{i \in [M]}{\max} \mathrm{Prob.} \left( {\cal{A}}^c_i \right) \\
&\le& 2M \underset{i \in [M]}{\max} \exp \left( - \frac{N \varepsilon^2}{2 \Lambda_i^{2m} B_i^2}\right), \nonumber 
\end{eqnarray}
where the first inequality is from the fact that, $\mathrm{Prob.} \left( A_1 \cup A_2 \right) = \mathrm{Prob.} \left( A_1 \right) + \mathrm{Prob.} \left(A_2 \right) - \mathrm{Prob.} \left( A_1 \cap A_2 \right) \le \mathrm{Prob.} \left( A_1 \right) + \mathrm{Prob.} \left(A_2 \right)$, and the last inequality is derived from (\ref{eq:Heoffding}). Lastly, by combining Eqs.~(\ref{eq:Bound_Lambda}, 
\ref{eq:ProbIntersection}, \ref{eq:ProbUnion}), we obtain that, with success probability at least $1-\delta$, all of the functions $f_i(x)$ can be approximated by the kernel predictor up to a maximal error $\varepsilon$, if the kernel parameter and the number of samples are set to be
\begin{eqnarray}
\Lambda^m \ge \frac{4 B_{\mathrm{max}} C_{\mathrm{max}}^m}{\varepsilon}, 
\end{eqnarray}
and
\begin{eqnarray}
\label{eq:N_Fejer_BCmax}
N &\ge& \frac{32 B_{\mathrm{max}}^4 C_{\mathrm{max}}^{2m} }{\varepsilon^4} \log \frac{2M}{\delta} 
\nonumber \\ 
&=& \mathrm{poly} \left(\frac{1}{\varepsilon}, \ B_{\mathrm{max}}, \ \log \frac{M}{\delta}\right),
\end{eqnarray}
respectively, where the respective maximal values of $B$ and $C$ are obtained by taking $i$ running from $1$ to $M$. Note that for the density matrix learning in Theorem 1, $B_{\mathrm{max}} = \| \rho\|_{\ell_{\infty}} \le 1$ and $M = {\cal{O}} (2^n)$; while for the quantum-state properties learning with strong locality in Theorem 2, $B_{\mathrm{max}} = {\cal{O}} (1)$ and $M = \mathrm{poly} (n)$. 

\bigskip
In addition, the normalization property of good kernels is particularly important for the density matrix learning problem, in keeping the trace of $\sigma_N(x)$ to be (approximately) one for all $x \in {\cal{X}}$. The Hoeffding inequality (or more generally, the McDiarmid's inequality)~\cite{FoundationsMLBook} ensures that the trace of the predicted density matrix, given by $\mathrm{Tr} \sigma_N (x) = \frac{1}{N} \sum_{i=1}^N F_{\Lambda} (x - x_i)$, lies in the vicinity of one with high probability. In concrete, we have 
\begin{eqnarray}
\label{eq:Trace_Norm}
\mathrm{Prob.} \left( \left|\mathrm{Tr} \sigma_N (x) - 1 \right| \ge \varepsilon \right) \le 2 \exp \left( - \frac{2 N \varepsilon^2}{\Lambda^{2m}}\right). \nonumber \\
\end{eqnarray}
Therefore, by taking $N$ in Eq.~(\ref{eq:N_Fejer_BCmax}), it suffices to set $\left|\mathrm{Tr} \sigma_N (x) - 1 \right| \le \varepsilon$ with probability at least $1-\delta$. 

\bigskip \bigskip
So far we have proved the statements in Theorems $1$ and $2$ in the main text with an example of PGKs, the rectangular Fej\'er kernel. Alternatively, one can use a Gaussian kernel as another choice of the PGKs, in which case the index set of the kernel is not restricted to positive integers. Then the Fej\'er kernel in Eq.~(\ref{eq:Fejer_DM_SI}) can be replaced by a Gaussian kernel, which is given by 
\begin{eqnarray}
\label{eq:Gaussian}
K_h (x) = C_h \mathrm{exp} \left( - \frac{\| x\|_{\ell_2}^2}{h} \right),
\end{eqnarray}
where $x \in {\cal{X}}$ and $C_h$ is the normalization coefficient. Note that here $1/\sqrt{h}$ plays the same role as $\Lambda$ in the Fej\'er kernel. We set $h \ll L/2$, then $C_h = \left( L/\left( \sqrt{\pi h}\right)\right)^m$. We extend the Gaussian kernel to be defined periodically outside ${\cal{X}}$, i.e., to be defined on a circle of length $L$ in each dimension as in the case of the Fej\'er kernel. Then, following a similar procedure as above, it can be straightforwardly shown that, for $m \ge 2$, the required number of samples for $\left|\hat{f}_N (x) - f(x) \right| \le d_N (f) + d_{\Lambda} (f) \le \varepsilon$, with confidence at least $1-\delta$ ($\delta \in (0, 1)$), is 
\begin{eqnarray}
\label{eq:N_Gaussian_SI}
N &\ge& \frac{2 B^2 C_g^{m} }{\varepsilon^2} \log \frac{2}{\delta} \nonumber \\
&=& \mathrm{poly} \left(\frac{1}{\varepsilon}, \ B, \ \log \frac{1}{\delta}\right),
\end{eqnarray}
where $C_g = \frac{m L^2}{\pi \eta^2} \log \frac{2mB}{\varepsilon}$. So the result is similar to that with a Fej\'er kernel in (\ref{eq:N_Fejer_SI}). Other examples of PGKs, which may have better performance, will be left for future work. 

\subsection{Non-uniform distributions}
\label{sec:NonUniformDistr}
We have demonstrated a polynomial sample complexity with a uniform probability density, $\mathrm{d}\mu(x) = \mathrm{d} x /L^m$. We recall that~\cite{FoundationsMLBook} the framework of the Probably Approximately Correct (PAC) learning also requires that the polynomial sample complexity is distribution-free. So it is meaningful to explore other practical probability densities, $\varrho(x)$, with which the sample $x_i$ is drawn from ${\cal{X}}$, denoted by $\mathrm{d}\mu(x) = \varrho(x) \mathrm{d} x$. Start from the above rectangular Fej\'er kernel as an example, one can use a weighted Fej\'er kernel, given by 
\begin{equation}
\tilde{F}_{\Lambda} (x) = \omega(x) F_{\Lambda}(x),
\end{equation} 
where $\omega(x) = \varrho_0(x)/\varrho(x)$, with $\varrho_0(x) = 1/L^m$ the uniform density distribution. Then the weighted Fej\'er kernel is a PGK when the probability density $\varrho(x)$ is used, and serves as the kernel used for a similar derivation leading to a polynomial sample complexity as in the above proving procedure, provided that $\tilde{F}_{\Lambda} (x)$ is also upper bounded by ${\cal{O}}\left( \Lambda^m\right)$ for $x \in {\cal{X}}$, as in the case of the Fej\'er kernel. While this is one of the possible methods to construct PGKs for non-uniform distributions, other approaches may exist.

\bigskip
In summary, by combining above Secs.~\ref{sec:MainIdeas}-\ref{sec:NonUniformDistr}, we complete the unified proof for Theorems $1$ and $2$ in the main text.

\medskip
\subsection{Comparison with existing results}
\label{sec:ComparisonExistingResults} 

Again, we compare the number of samples above with the exponential scaling for learning the ground-state properties $f_O(x)$ in~\cite{HuangScience2022Manybody}, with more mathematical details compared to that in the main text. The target function defined on the compact parameter space (with $L=2$) in~\cite{HuangScience2022Manybody} is assumed to have a bounded first-order derivative, which means it satisfies the Lipschitz condition mentioned earlier, 
\begin{equation}
\left| f_O(x) - f_O(y)\right| \le C_{L} \| x - y\|_2, 
\end{equation}
for all $x, y \in {\cal{X}}$. This is a result of the mean value theorem, where the first-order derivative of the function is upper bounded by $C_L$ here. As a consequence, we have $\eta = \varepsilon/\left( 4 C_L\right)$ (or equivalently, $k = 1$). Also, note that Ref.~\cite{HuangScience2022Manybody} uses an error metric of averaged square distance between the predicted and target functions, while here we used the supremum-norm distance bounded by $\varepsilon$. With the same notation, the required number of samples in~\cite{HuangScience2022Manybody} is of the order (with $C_L = 1$)
\begin{equation}
N_0 =\frac{B^2}{\varepsilon^2} (2m + 1)^{\frac{1}{\varepsilon^2}}.
\end{equation}
Note that $\left| f_O (x)\right| \le B/2 = {\cal{O}}(M)$ by assuming that $\|O_i \|_{\infty} = {\cal{O}}(1)$, and $M = {\cal{O}}(n^q)$ for the $q$-local operator $O$. 

\medskip
For example, taking $m = 2$, the qubit number $n=100$, $q=1$, and $\varepsilon = 0.1$, one can easily obtain 
\medskip
\begin{equation}
\frac{N}{N_0} \approx 10^{-48},
\end{equation}
with a Fej\'er kernel, where the required number of samples $N$ is given by (\ref{eq:N_Fejer_SI}), and 
\begin{equation}
\frac{N}{N_0} \approx 10^{-61},
\end{equation}
with a Gaussian kernel, where the required number of samples $N$ is given by (\ref{eq:N_Gaussian_SI}), respectively. \emph{We see that when $m$ is a small constant, very significant reductions in the provable sample complexity can be achieved here with the PGKs, as in the main text}.

\medskip
\section{Bounded smooth parametrization implies continuous density matrices}
\label{sec:SmoothParametrization}

In Theorem 1, we assume that the entries of the density matrix are continuous over the parameter space. In this section, we show the continuity of the entries of the density matrix, provided that the operator norm of the gradient of the parametrized density operator is uniformly upper bounded. This holds for ground states of gapped Hamiltonians and other smoothly parametrized models. 

\begin{lemma} 
\label{lemma:DM_Conti}
For a family of density operators $\rho(x)$ smoothly parametrized by $x$, if the gradient is upper bounded uniformly in the operator norm, i.e., $\|\partial_x \rho(x) \|_{\infty} \le C_0$, then each entry of the density matrix is continuous over the parameter space.
\end{lemma}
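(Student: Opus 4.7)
The plan is to reduce the statement to a pointwise bound on entry derivatives and then invoke the mean value theorem (or the fundamental theorem of calculus) to obtain Lipschitz, hence continuous, behavior of each entry.

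First I would observe a simple but crucial inequality: for any Hermitian (or arbitrary) operator $A$ acting on the $n$-qubit Hilbert space and any fixed computational-basis vectors $|i\rangle, |j\rangle$, the entry $A_{ij} = \langle i | A | j\rangle$ satisfies $|A_{ij}| \le \|A\|_\infty$, because $|i\rangle$ and $|j\rangle$ are unit vectors and the operator norm coincides with the supremum of $|\langle u | A | v\rangle|$ over unit vectors. Applied to $A = \partial_{x_k}\rho(x)$, this immediately yields the pointwise derivative bound $|\partial_{x_k} \rho_{ij}(x)| \le \|\partial_{x_k} \rho(x)\|_\infty \le C_0$ for every entry $(i,j)$ and every coordinate $k \in \{1,\dots,m\}$, under the natural reading of the gradient-norm assumption (taking componentwise bounds, or absorbing the factor $\sqrt{m}$ into $C_0$).

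Next I would connect two arbitrary parameter points $x, y \in {\cal X}$ by the straight line segment $\gamma(t) = y + t(x-y)$, $t \in [0,1]$, which lies in ${\cal X}$ since ${\cal X}$ is a product of intervals and hence convex. By smoothness of the parametrization, the real-valued function $t \mapsto \rho_{ij}(\gamma(t))$ is differentiable, and the fundamental theorem of calculus gives
\begin{equation}
\rho_{ij}(x) - \rho_{ij}(y) = \int_0^1 \sum_{k=1}^m (x_k - y_k)\, \partial_{x_k} \rho_{ij}(\gamma(t))\, \mathrm{d}t.
\end{equation}
Combining with the pointwise bound from the previous step yields
\begin{equation}
|\rho_{ij}(x) - \rho_{ij}(y)| \le C_0 \sqrt{m}\, \|x - y\|_2,
\end{equation}
so each entry $\rho_{ij}(\cdot)$ is Lipschitz continuous on ${\cal X}$ with Lipschitz constant at most $C_0 \sqrt{m}$, and in particular continuous. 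This is exactly the hypothesis needed by Theorem~\ref{thm:LearningDM}, and in fact it places us in the favorable Lipschitz regime where $\eta = {\cal O}(\varepsilon)$ in the continuity modulus.

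The only real subtlety — and the step to phrase carefully rather than a genuine obstacle — is making the "gradient" assumption precise in the multiparameter case. I would state at the outset that the hypothesis is read as $\max_k \|\partial_{x_k}\rho(x)\|_\infty \le C_0$ uniformly in $x$ (any equivalent vector norm on the gradient would do, at most changing constants by a factor of $\sqrt{m}$). Once this is fixed, the proof is a one-line application of the operator-norm bound on entries followed by the mean value theorem along segments in the convex domain ${\cal X}$; no spectral gap or ground-state structure is invoked, which is precisely why the argument extends beyond gapped Hamiltonians to any smoothly parametrized family (Kraus-operator families, quantum metrological states, etc.) whose derivative norm is controlled.
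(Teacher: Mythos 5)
Your proof is correct and follows essentially the same route as the paper's: both reduce the claim to the pointwise bound $|\partial_x \rho_{ij}(x)| \le \|\partial_x \rho(x)\|_\infty \le C_0$ on the entry derivatives and then conclude Lipschitz continuity via the mean value theorem (your fundamental-theorem-of-calculus integral along a segment is the same step, made explicit for the multiparameter case). The only difference is cosmetic: you invoke the standard fact $|\langle i|A|j\rangle| \le \|A\|_\infty$ directly, whereas the paper re-derives it for Hermitian $A = \partial_x\rho(x)$ via spectral decomposition, Cauchy--Schwarz, and the normalization $\sum_k |\langle i|e_k\rangle|^2 = 1$.
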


\begin{proof} 
Because $ \partial_x \rho(x)$ is a Hermitian operator, it has a spectral decomposition given by 
\begin{eqnarray}
\label{eq:GradientRho_spec_decom}
 \partial_x \rho(x) = \sum_{k =1} \lambda_k |e_k\rangle \langle e_k |,
\end{eqnarray}
where $\lambda_k$ is the eigenvalue and $|e_k\rangle$ is the corresponding eigenvector (which both may be $x$-dependent).

As a result, the gradient of the entry of the density matrix is bounded by 
\begin{eqnarray}
\label{eq:GradientRho_spec_decom}
\left| \partial_x \rho_{i j}(x) \right| &=& \left| \sum_{k =1} \lambda_k \langle i| e_k\rangle \langle e_k | j \rangle \right| \nonumber \\ 
&\le& \sum_{k =1} \left| \lambda_k \right| \ \left| \langle i| e_k\rangle \langle e_k | j \rangle \right| 
\nonumber \\ 
&\le& \|\partial_x \rho(x) \|_{\infty} \sum_{k =1} \left| \langle i| e_k\rangle \langle e_k | j \rangle \right| \nonumber \\ 
&\le& C_0 \left(\sum_{k=1} \left| \langle i | e_k \rangle\right|^2 \sum_{k=1} \left| \langle j | e_k \rangle\right|^2 \right)^{\frac{1}{2}} \nonumber \\ 
&\le& C_0. 
\end{eqnarray}
where $|i\rangle$ is a parameter-independent basis of the Hilbert space, and in the second last inequality we have used the Cauchy-Schwartz inequality and the normalization condition $\langle i | i \rangle = \sum_{k=1} \left| \langle i | e_k \rangle\right|^2 = 1 \ \forall i$.

Because the gradient of the entry is uniformly bounded by $C_0$ as above, it is a Lipschitz continuous function over the parameter space, which can be directly obtained via the mean value theorem. It can be directly generalized to the multiparameter case, provided that Lemma~\ref{lemma:DM_Conti} holds in each dimension of the parameter space.
\end{proof}

\medskip
In the following we list a few examples where the gradient of the density operator can be uniformly bounded.

\medskip
\subsection{Ground states of gapped Hamiltonians} The gradient of the ground-state density operator for a local Hamiltonian $H(x)$ ($0 \le x \le 1$) with a uniform constant gap $\gamma > 0$ is given by~\cite{BachmannCMP2011}
\begin{eqnarray}
\label{eq:GradientRho_GS_gapped}
\partial_x \rho(x) &=& i \left[\rho(x), D(x) \right], \nonumber\\  
\mathrm{with} \ \ D(x) &=& \int_{-\infty}^{+\infty} \mathrm{d}t \ W_{\gamma}(t) \ \mathrm{e}^{it H(x)} 
\, \partial_x H(x) \, \mathrm{e}^{-it H(x)}, \nonumber \\
\end{eqnarray}
where $W_{\gamma}(t) \in L^1 (\mathbb{R})$ is a decaying weight function satisfying $\left\| W_{\gamma} \right\|_1 \le c_0/\gamma$, for some positive constant $c_0$~\cite{BachmannCMP2011}.

It is straightforward to show that 
\begin{eqnarray}
\left \| \partial_x \rho(x) \right \|_{\infty} &\le& 2 \left \| D(x) \right \|_{\infty}  \left \| \rho(x) \right \|_{\infty}\nonumber \\ 
&\le& 2 \int_{-\infty}^{+\infty} \mathrm{d}t \ \left|W_{\gamma}(t) \right| \left \| \partial_x H(x) \right \|_{\infty} \nonumber \\
&\le& \frac{2c_0}{\gamma} \left \| \partial_x H(x) \right \|_{\infty}.
\end{eqnarray}
Note that we have used that $\left \| \rho(x) \right \|_{\infty} \le 1$ and that the operator norm is invariant under unitary transformations in the second inequality. Therefore, the gradient of the density matrix above can be bounded provided that each component of the local Hamiltonian $H(x) = \sum_j h_j(x)$ has a bounded gradient in the operator norm, i.e., $\|\partial_x h_j(x) \|_{\infty} \le \mathrm{constant}$. 

\medskip
\subsection{States smoothly parametrized by Kraus operators}
In the Kraus representation~\cite{NielsenChuang}, starting from a parameter independent initial state $\rho_0$, the parametrized state (which can be a pure or a mixed state) is given by 
\begin{equation}
\rho(x) = \sum_i K_i (x) \rho_0 K_i^{\dagger} (x), 
\end{equation}
where $K_i (x)$ are the Kraus operators satisfying $\sum_i K_i (x) K_i^{\dagger} (x) = \mathrm{Id \ (identity)}$. The gradient of the quantum states satisfies 
\begin{eqnarray}
\label{eq:GradientRho_Kraus}
\left \| \partial_x \rho(x) \right \|_{\infty} \le 2 \sum_i \left \| K_i (x) \right \|_{\infty} \left \| \partial_xK_i (x) \right \|_{\infty},
\end{eqnarray}
where we have used the subadditivity and submultiplicity of the operator norm, and $\left \| \rho_0 \right \|_{\infty} \le 1$.

Therefore, the gradient of $\rho(x)$ can be uniformly upper bounded, as long as the operator norms of $K_i (x)$ and $\partial_x K_i (x)$ ($\forall i$) for the physical process are uniformly bounded from above.

\medskip
\subsection{Quantum parameter estimation} 
In quantum parameter estimation or quantum metrology, quantum Fisher information plays a central role in characterizing the fundamental precision limit (e.g., for a recent review, see~\cite{LiuJPAReview2020}; see also~\cite{Helstrom,Holevo,MaPhysRep2011,ChePRA2019}). The quantum Fisher information with respect to the single parameter $x$ for the quantum state $\rho(x)$ reads 
\begin{equation}
F = \mathrm{Tr} \left[\rho(x) {\cal{L}}^2 \right], 
\end{equation}
where ${\cal{L}}$ is the symmetric logarithmic derivative (SLD) operator given by 
\begin{eqnarray}
\label{eq:GradientRho_SLD}
\partial_x \rho(x) = \frac{1}{2} \left[ \rho(x) {\cal{L}}+ {\cal{L}} \rho(x) \right].
\end{eqnarray}

Consequently, we can obtain that $\left \| \partial_x \rho(x) \right \|_{\infty} \le \left \| {\cal{L}} \right \|_{\infty}$. This condition holds in quantum metrological schemes with finite amount of quantum Fisher information (e.g., without quantum phase transitions in the parameter range of interest).

\section{Review of reproducing kernel Hilbert space (RKHS)}
\label{sec:ReviewRKHS}

Here we provide a brief review of key ideas and main results for the RKHS~\cite{SVMBook2008,arxiv2101.11020}. 

\begin{definition} 
Let ${\cal{X}}$ be a compact space and $\cal{H}$ be a Hilbert space of functions $f: {\cal{X}} \to \mathbb{R}$ with the inner product $\left\langle \cdot, \cdot \right \rangle_{\cal{H}}$, a kernel associated with $\cal{H}$ is defined as $K: {\cal{X}} \times {\cal{X}} \to \mathbb{R}$. If $K ({\cal{X}}, {\cal{X}})$ is symmetric and positive definite, and has the following reproducing property, i.e., for all functions $f \in {\cal{H}}$, $f(x) = \left\langle f, K(x, )\right \rangle_{\cal{H}}$, then $\cal{H}$ is called the reproducing kernel Hilbert space (RKHS) of $K$, denoted by ${\cal{H}}_K$.
\end{definition}

Properties of RKHS:

\begin{enumerate}[(1)]
\item $\forall x \in {\cal{X}}$, $K(x, ) \in {\cal{H}}_K$;

\item $K(x, x^{\prime}) = \left \langle K(x, ), K(x^{\prime}, ) \right\rangle_{{\cal{H}}_K}$, which naturally follows from (1) and the reproducing property of RKHS;

\item The linear evaluation operator ${\cal{E}}_x$ is bounded, i.e.,
\begin{eqnarray}
\label{eq:eval_operator}
|{\cal{E}}_x f| &=& |f(x)| = | \langle f, K(x, )\rangle_{{\cal{H}}_K} |  \nonumber \\
&\le& \sqrt{K(x, x)} \|f\|_{{\cal{H}}_K},
\end{eqnarray}
where the second equation follows from the reproducing property and the last inequality results from the Cauchy-Schwartz inequality.
\end{enumerate}

\medskip
\textbf{Theorem (Moore-Aronszajn):} \ \
Let $K: {\cal{X}} \times {\cal{X}} \to \mathbb{R}$ be a positive definite kernel. There is a unique RKHS ${\cal{H}}_K$ with the reproducing kernel $K$.

\medskip

\textbf{Universal kernels:} \ \ Let $C({\cal{X}})$ be the space of bounded continuous functions on ${\cal{X}}$. A kernel $K({\cal{X}}, {\cal{X}})$ is called universal if the RKHS ${\cal{H}}_K$ is dense in $C({\cal{X}})$, i.e., $\forall f \in C({\cal{X}})$ and $\forall \varepsilon > 0$, there is a $f^{*} \in {\cal{H}}_K$ such that $\|f(x) - f^*(x) \|_{\infty} \le \varepsilon$.

\medskip
\textbf{Representer theorem:} \ \
The optimal function $\hat{f} $ in the RKHS that minimizes the functional-norm regularized empirical loss on the training set ${\cal{S}} = \{x_i, y_i \}_{i=1}^N$:
\begin{eqnarray}
\hat{f} = \underset{f \in {\cal{H}}_K}{\arg \min} \, \frac{1}{N}\sum_{i=1}^N \ell(f(x_i), y_i) 
+ \lambda \| f \|_{{\cal{H}}_K},
\end{eqnarray}
takes the form
\begin{eqnarray}
\hat{f}(x) = \sum_{i=1}^N \alpha_i K(x, x_i),
\end{eqnarray}
where the dual variables $\{\alpha_i\} \subset \mathbb{R}$ are determined by minimizing the specific empirical loss function, and $\lambda >0$ is a regularization parameter to avoid the overfitting. This is particularly useful for reducing the complexity of the problem, by transforming the optimization over an infinite-dimensional Hilbert space into that over a finite set of real variables.

\bigskip
\section{Learning quantum-state properties in reproducing kernel Hilbert space and generalization error}
\label{sec:GSPinRKHS}

Here we present the generalization error bound for ansatz functions from the reproducing kernel Hilbert space (RKHS)~\cite{FoundationsMLBook}.
Given spaces ${\cal{X}} \subset \mathbb{R}^m$ and $Y \subset \mathbb{R}$, and a joint distribution $\cal{D}$ over the joint space, where $(x, y) \sim {\cal{D}}$ for $x \in {\cal{X}}$ and $y \in Y$, we are aiming at learning the continuous target function $f_O: {\cal{X}} \to Y$, provided a sample set 
\begin{equation}
{\cal{S}} = \{z_i = (x_i, y_i)\}_{i=1}^N = \{z_i = \left(x_i, f_O(x_i) \right)\}_{i=1}^N 
\end{equation}
drawn from the distribution $\cal{D}$. An optimal function $\hat{f}_O \in {{\cal{H}}_K}$ can be obtained by minimizing an empirical loss defined on the sample set, then the estimator $\hat{f}_O$ will be applied to the whole space ${\cal{X}}$ for predictions.

If we take the loss to be 
\begin{equation}
\ell \left( \hat{f}_O(x), y \right) = \left| \hat{f}_O(x) - f_O (x) \right|, 
\end{equation}
the empirical error is given by 
\begin{eqnarray}
{\cal{E}}_t = \frac{1}{N} \sum_{i=1}^N \left|\hat{f}_O(x_i) - f_O(x_i) \right|,
\end{eqnarray}
and the expected prediction error is 
\begin{eqnarray}
{\cal{E}}_p &=& \underset{(x, y) \sim {\cal{D}}}{\mathbb{E}} \left| \hat{f}(x) - y  \right| \nonumber \\
&=& \underset{x}{\mathbb{E}} \left| \hat{f}_O(x) - f_O(x) \right|.
\end{eqnarray}

In statistical learning theory~\cite{FoundationsMLBook,SVMBook2008}, the expected prediction error is upper bounded by the empirical error plus a term proportional to $1/ \sqrt{N}$. For the model with the RKHS ${\cal{H}}_K$, it can be written as follows:

With probability at least $1 - \delta$ ($\delta \in (0, 1)$),
\begin{eqnarray}
\label{eq:PredictionErrorRKHS_supp}
{\cal{E}}_p \le {\cal{E}}_t + {\cal{R}_{{\cal{S}}}} \left( {\cal{L}} \circ {{\cal{H}}_K}\right) + 3\beta \sqrt{\frac{\log \frac{2}{\delta}}{2N}},
\end{eqnarray}
where 
\begin{eqnarray}
{\cal{R}_{{\cal{S}}}} \left( {\cal{L}} \circ {{\cal{H}}_K}\right) \le {\cal{R}_{{\cal{S}}}}
\left( {{\cal{H}}_K}\right) = \frac{2\lambda_f \sqrt{\mathrm{tr}K}}{N}
\end{eqnarray}
is the empirical Rademacher complexity of the loss space, and provided that $\|\hat{f}_O\|_{{\cal{H}}_K} \le \lambda_f$, and $0 \le \ell \left( \hat{f}_O(x), f_O(x) \right) \le \beta$ for all $x \in {\cal{X}}$. Note that we have used the property of the Rademacher complexity that 
\begin{eqnarray}
{\cal{R}_{{\cal{S}}}} \left( {\cal{L}} \circ {{\cal{H}}_K}\right) \le C_L {\cal{R}_{{\cal{S}}}} \left( {{\cal{H}}_K}\right),
\end{eqnarray}
provided that $\ell(\cdot, y) \in {\cal{L}}$ is a $C_L$-Lipschitz continuous function ($C_L$ = 1 for the loss we use here). The trace of the kernel defined on the training set satisfies $\mathrm{tr}K \le N R^2$ provided that $K(x, x) \le R^2$. 

With the PGK estimator, 
\begin{equation}
\hat{f}_O (x) = \frac{1}{N} \sum_{i=1}^N K_{\Lambda} (x - x_i) f_O (x_i), 
\end{equation}
the empirical error can be approximated as (when $\Lambda$ is a large number, the PGK plays a role of the Dirac delta function) 
\begin{eqnarray}
{\cal{E}}_t \approx \frac{1}{N} \sum_{i=1}^N \left| \frac{1}{N} \sum_{j=1}^N K_{\Lambda}(x_i - x_j) - 1\right| \times \left|f_O (x_i)\right|, \nonumber \\
\end{eqnarray}
which can be upper bounded at ${\cal{O}}(\varepsilon)$ with large probability, given a polynomial $N$ as in (\ref{eq:N_Fejer_BCmax}), with which $\mathrm{Tr} \sigma_N(x_i)$ is close to one up to a ${\cal{O}}(\varepsilon)$ error; Also, note that $f_O (x)$ is bounded.
Once ${\cal{E}}_t$ is minimized up to ${\cal{O}}(\varepsilon)$, one needs to bound the second plus the third terms on the right-hand side of (\ref{eq:PredictionErrorRKHS_supp}) also up to ${\cal{O}}(\varepsilon)$, which will result in ${\cal{E}}_p \le {\cal{O}}(\varepsilon)$, to facilitate a good generalization and prediction. Then we get the minimal number of samples required for this purpose
\begin{eqnarray}
\label{eq:SampleComplexity_SI}
N &\ge& \frac{1}{\varepsilon^2} \left( 8\lambda_f R \sqrt{\frac{\log \frac{2}{\delta}}{2}} \right)^2 
\nonumber \\ 
&=& \mathrm{poly} \left( \frac{1}{\varepsilon}, \ B,\ \log\frac{1}{\delta}\right),
\end{eqnarray}
as in the main text, \emph{which indicates an efficient sample complexity}. Note that with the reproducing property for the function $\hat{f}_O(x)$, one obtains that 
\begin{eqnarray}
\ell \left( \hat{f}_O(x), f_O(x) \right) &\le& \left|\hat{f}_O(x)\right| + \left| f_O(x) \right| \nonumber \\ 
&\le& \lambda_f R + \frac{B}{2} \\ 
&\le& 2\lambda_f R =: \beta, \nonumber
\end{eqnarray}
where the second inequality is obtained from (\ref{eq:eval_operator}), and the last inequality is because $R$ is a large number such that $\lambda_f R \ge B/2$. Furthermore, if the dual variables are set to be 
\begin{equation}
\alpha_i \approx \frac{f_O(x_i)}{N},
\end{equation}
as in the PGK estimator in the main text, we have $\lambda_f = B R/2$. Also, to obtain (\ref{eq:SampleComplexity_SI}), we have assumed $\log \frac{2}{\delta} \ge 2$ for a small value of $\delta$. The result in (\ref{eq:SampleComplexity_SI}) is consistent with that obtained from (\ref{eq:Heoffding}). 
Note that due to the normalization of the kernel, $R$ can be in general dependent on $m$ exponentially. This can be further improved by the dimensionality reduction in specific learning tasks.


\bibliography{Ref}

\end{document}